\newtheoremstyle{newplain}
{4pt}
{4pt}
{\itshape}
{}
{\itshape\bf}
{.}
{.5em}
{}
\theoremstyle{newplain}
\newtheorem{theorem}{Theorem}[section]
\newtheorem{proposition}{Proposition}[section]
\newtheorem{definition}{Definition}[section]
\newtheorem{assumption}{Assumption}[section]
\newtheoremstyle{newdefinition}
{4pt}
{4pt}
{}
{}
{\itshape\bf}
{.}
{.5em}
{}
\theoremstyle{newdefinition}
\newtheorem{vg}{Example}[section]
\newtheorem{remark}{Remark}[section]
\numberwithin{equation}{section}
\newcommand{\be}{\begin{equation}}
\newcommand{\ee}{\end{equation}}
\newcommand{\nee}{\nonumber\end{equation}}
\newcommand{\eel}[1]{\label{#1}\end{equation}}
\newcommand{\brmk}[1]{\begin{remark}\label{#1}\begin{em} }
\newcommand{\ermk}{ $\quad\triangleleft$\end{em}\end{remark}}
\newcommand{\bvg}[1]{\begin{vg}\label{#1}\begin{em} }
\newcommand{\evg}{ $\quad\triangleleft$\end{em}\end{vg}}
\begin{document}
\bibliographystyle{plainnat}
\setlength{\abovedisplayskip}{8pt}
\setlength{\belowdisplayskip}{8pt}
\setlength{\abovedisplayshortskip}{4pt}
\setlength{\belowdisplayshortskip}{8pt}

\begin{titlepage}

\title{{\bf{\Huge Long-Term Factorization of Affine Pricing Kernels}}
\thanks{This paper is based on research supported by the grant CMMI-1536503 from the National Science Foundation.}}
\author{Likuan Qin\thanks{likuanqin2012@u.northwestern.edu}
}
\author{Vadim Linetsky\thanks{linetsky@iems.northwestern.edu}
}
\affil{\emph{Department of Industrial Engineering and Management Sciences}\\
\emph{McCormick School of Engineering and Applied Sciences}\\
\emph{Northwestern University}}
\date{}
\end{titlepage}

\maketitle

\begin{abstract}
This paper constructs and studies the long-term factorization of affine pricing kernels into discounting at the rate of return on the long bond and the martingale component that accomplishes the change of probability measure to the long forward measure. The principal eigenfunction of the affine pricing kernel germane to the long-term factorization is an exponential-affine function of the state vector with the coefficient vector identified with the fixed point of the Riccati ODE. The long bond volatility and the volatility of the martingale component are explicitly identified in terms of this fixed point. A range of examples from the asset pricing literature is provided to illustrate the theory.
%
\end{abstract}

\newpage

\section{Introduction}

The stochastic discount factor (SDF) is a fundamental object in arbitrage-free asset pricing models. It assigns today's prices to risky future payoffs at alternative investment horizons. It accomplishes this by simultaneously discounting the future and adjusting for risk.
A familiar representation of the SDF is a factorization into discounting at the risk-free interest rate and a martingale component adjusting for risk. This martingale accomplishes the change of probabilities to the risk-neutral probability measure.
More recently \citet{alvarez_2005using}, \citet{hansen_2008consumption}, \citet{hansen_2009} and \citet{hansen_2012} introduce and study an alternative {\em long-term factorization} of the SDF. The {\em transitory component} in the long-term factorization discounts at the rate of return on the pure discount bond of asymptotically long maturity (the {\em long bond}). The {\em permanent component} is a  martingale that accomplishes a change of probabilities to the {\em long forward measure}. \citet{linetsky_2014long} study the long-term factorization and the long forward measure in the general semimartingale setting.

The long-term factorization of the SDF is particularly convenient in applications to the pricing of long-lived assets  and to theoretical and empirical investigations of the term structure of the risk-return trade-off. In addition to the references above, the growing literature on the long-term factorization and its applications includes \citet{hansen_2012pricing},  \citet{hansen_2013}, \citet{borovicka_2014mis}, \citet{borovivcka2011risk}, \citet{borovivcka2016term},
\citet{bakshi_2012}, \citet{bakshia2015recovery}, \citet{christensen2014nonparametric},  \citet{christensen_2013estimating}, \citet{linetsky_2014_cont}, \citet{linetsky2016bond}, \citet{backus2015term}, \citet{filipovic2016linear}, \citet{filipovic2016relation}. Empirical investigations in this literature show that the martingale component in the long-term factorization is highly volatile and economically significant (see, in particular, \citet{bakshi_2012} for results based on pricing kernel bounds,  \citet{christensen2014nonparametric} for results based on structural asset pricing models connecting to the macro-economic fundamentals, and  \citet{linetsky2016bond} for results based on explicit parameterizations of the pricing kernel, where, in particular, the relationship among the measures ${\mathbb P}$, ${\mathbb Q}$ and ${\mathbb L}$ is empirically investigated).

The focus of the present paper is on the analysis of long-term factorization in affine diffusion models, both from the perspective of providing a user's guide to constructing long-term factorization in affine asset pricing models, as well as employing affine models as a convenient laboratory to illustrate the theory of the long-term factorization.
Affine diffusions are work-horse models in continuous-time finance due to their analytical and computational tractability (\citet{vasicek_1977equilibrium}, \citet{cox_1985_2}, \citet{duffie_1996}, \citet{duffie_2000}, \citet{dai_2000}, \citet{duffie_2003}). In this paper we show that  the principal eigenfunction of \citet{hansen_2009} that determines the long-term factorization, if it exists, is necessarily in the exponential-affine form in affine models, with the coefficient vector in the exponential identified with the fixed point of the corresponding Riccati ODE.
This allows us to give a fully explicit treatment and illustrate dynamics of the long bond, the martingale component and the long-forward measure in affine models. In particular, we explicitly verify that when the Riccati ODE associated with the affine pricing kernel possesses a fixed point, the affine model satisfies the sufficient condition in Theorem 3.1 of  \citet{linetsky_2014long} so that the long-term limit exists.

In Section \ref{Brownian} we review and summarize the long-term factorization in Brownian motion-based models. In Section \ref{exist_affine} we present general results on the long-term factorization of affine pricing kernels. The main results are given in Theorem \ref{affine_long}, where the market price of Brownian risk is explicitly decomposed into the market price of risk under the long forward measure identified with the volatility of the long bond and the remaining market price of risk determining the martingale component accomplishing the change of probabilities from the data-generating to the long forward measure. The latter component is determined by the fixed point of the Riccati ODE.
In Section \ref{examples} we study a range of examples of affine pricing kernels from the asset pricing literature.

\section{Long-Term Factorization in Brownian Environments}
\label{Brownian}

We work on a complete filtered probability space $(\Omega,{\mathscr F},({\mathscr F}_{t})_{t\geq 0},{\mathbb P})$.
We assume that all uncertainty in the economy is generated by an $n$-dimensional Brownian motion $W_t^{\mathbb{P}}$ and that
 $(\mathscr{F}_t)_{t\geq 0}$ is the (completed) filtration generated by $W_t^{\mathbb{P}}$. We assume absence of arbitrage and market frictions, so that there exists a strictly  positive pricing kernel process in the form of an It\^{o} semimartingale. More precisely, we assume that the pricing kernel follows an It\^{o} process ($\cdot$ denotes vector dot product)
$$
dS_t=-r_tS_tdt-S_t \lambda_t \cdot dW_t^{\mathbb{P}}
$$
with $\int_0^t |r_s|ds<\infty$ and the market price of Brownian risk vector $\lambda_t$ such that the process
$$
M_t^0=e^{-\int_0^t \lambda_s\cdot dW_s^{\mathbb{P}}-\frac{1}{2}\int_0^t \|\lambda_s\|^2 ds}
$$
is a martingale (Novikov's condition ${\mathbb E}^{\mathbb P}[e^{\frac{1}{2}\int_0^t \|\lambda_s\|^2 ds}]<\infty$ for each $t>0$ suffices).
Under these assumptions the pricing kernel has the risk-neutral factorization
\be
S_t=\frac{1}{A_t}M_t^0=e^{-\int_0^t r_s ds}M_t^0
\eel{rnf}
into discounting at the risk-free short rate
$r_t$ determining the risk-free asset (money market account) $A_t=e^{\int_0^t r_s ds}$ and the exponential martingale $M^0_t$ with the market price of Brownian risk $\lambda_t$ determining its volatility.
We also assume that ${\mathbb E}^{\mathbb{P}}[S_T/S_t]<\infty$ for all $T>t\geq 0$. The integrability of the SDF $S_T/S_t$ for any two dates $T>t$ ensures that that zero-coupon bond price processes
$$P_t^T:=\mathbb{E}_t^\mathbb{P}[S_T/S_t], \quad t\in [0,T]$$
are well defined for all maturity dates $T>0$ ($\mathbb{E}_t[\cdot]=\mathbb{E}[\cdot|{\mathscr F}_{t}]$).

Since for each $T$ the $T$-maturity zero coupon bond price process $P_t^T$ can be written as $P_t^T=M_t^TP_0^T/S_t$, where $M_t^T=S_t P_t^T/P_0^T= \mathbb{E}_t^\mathbb{P}[S_T]/\mathbb{E}_0^\mathbb{P}[S_T]$ is a positive martingale on $t\in [0,T]$, we can apply the  Martingale Representation Theorem to claim that $$dM_t^T=-M_t^T \lambda_t^T \cdot dW^{\mathbb P}_t$$ with some $\lambda_t^T$, and further claim that the bond price process has the representation
$$
dP_t^T=(r_t+ \sigma^T_t\cdot \lambda_t) P_t^Tdt+P_t^T \sigma^T_t \cdot dW_t^\mathbb{P}
$$
with the volatility process $\sigma^T_t=\lambda_t-\lambda_t^T$.

Following \citet{linetsky_2014long}, for each fixed $T>0$ we define a self-financing trading strategy that rolls over investments in $T$-maturity zero-coupon bonds as follows.
Fix $T$ and consider a self-financing roll-over strategy that starts at time zero by investing one unit of account in $1/P_{0}^T$ units of the $T$-maturity zero-coupon bond. At time $T$ the bond matures, and the value of the strategy is $1/P_{0}^T$ units of account. We roll the proceeds over by re-investing into $1/(P_{0}^T P_{T}^{2T})$ units of the zero-coupon bond with maturity $2T$. We continue with the roll-over strategy, at each time $kT$ re-investing the proceeds into the bond $P_{kT}^{(k+1)T}$. We denote the valuation process of this self-financing strategy $B_t^T$:
\[
B_t^T = \left(\prod_{i=0}^k P_{iT}^{(i+1)T}\right)^{-1} P_{t}^{(k+1)T},\quad t\in [kT,(k+1)T),\quad k=0,1,\ldots.
\]
For each $T>0$, the process $B_t^T$ is defined for all $t\geq 0$.
The process $S_t B_t^T$ extends the martingale $M_t^T$ to all $t\geq 0$. It thus defines the $T$-{\em forward measure} ${\mathbb Q}^T|_{{\mathscr F}_{t}}=M_t^T {\mathbb P}|_{{\mathscr F}_{t}}$
 on  ${\mathscr F}_{t}$ for each $t\geq 0$, where $T$ now has the meaning of the length of the compounding interval. Under the $T$-forward measure ${\mathbb Q}^T$ extended to all ${\mathscr F}_{t}$, the roll-over strategy $(B_t^T)_{t\geq 0}$ with the compounding interval $T$ serves as the  numeraire asset. Following \citet{linetsky_2014long}, we continue to call the measure extended to all ${\mathscr F}_{t}$ for $t\geq 0$ the $T$-forward measure and use the same notation, as it reduces to the standard definition of the forward measure on ${\mathscr F}_{T}$.

Since the roll-over strategy $(B^T_t)_{t\geq 0}$ and the positive martingale $M_t^T=S_t B_t^T$ are defined for all $t\geq 0$, we can write the $T$-{\em forward  factorization} of the pricing kernel for all $t\geq 0$:
\be
S_t = \frac{1}{B_t^T}M_t^T.
\eel{Tfactorization}

We now recall the definitions of the {\em long bond} and the {\em long forward measure} from \citet{linetsky_2014long}.
\begin{definition}{\bf (Long Bond)}
\label{def_longbond}
If the wealth processes $(B^T_t)_{t\geq 0}$ of the roll-over strategies in $T$-maturity bonds converge to a strictly positive semimartingale $(B_t^\infty)_{t\geq 0}$ uniformly on compacts in probability
as $T\rightarrow \infty$, i.e. for all $t>0$ and $K>0$
\[
\lim_{T\rightarrow \infty} {\mathbb P}(\sup_{s\leq t}|B_s^T-B_s^\infty|>K)=0,
\]
we call the limit the {\em long bond}.
\end{definition}

\begin{definition}{\bf (Long Forward Measure)}
\label{def_longforward}
If there exists a measure $\mathbb{Q}^\infty$ equivalent to $\mathbb{P}$ on each ${\mathscr F}_t$ such that the $T$-forward measures converge strongly to ${\mathbb Q}^\infty$ on each ${\mathscr F}_t$, i.e.
\[
\lim_{T\rightarrow \infty}{\mathbb Q}^T(A)={\mathbb Q}^\infty(A)
\]
for each $A\in {\mathscr F}_t$ and each $t\geq 0$,
we call the limit the {\em long forward measure} and denote it ${\mathbb L}$.
\end{definition}
The following theorem, proved in
\citet{linetsky_2014long}, gives a sufficient condition that ensures convergence to the long bond in the semimartingale topology which is stronger than the ucp convergence in Definition 1 and convergence of $T$-forward measures to the long forward measure in total variation, which is stronger than the strong convergence in Definition 2 (we refer to  \citet{linetsky_2014long} and the on-line appendix for proofs and details).

\begin{theorem}{\bf (Long Term Factorization and the Long Forward Measure)}
\label{implication_L1}
Suppose that for each $t>0$ the ratio of the ${\mathscr F}_t$-conditional expectation of the pricing kernel $S_T$ to its unconditional expectation converges to a positive limit in $L^1$ as $T\rightarrow \infty$ (under ${\mathbb P}$), i.e. for each $t>0$ there exists an almost surely positive ${\mathscr F}_t$-measurable random variable which we denote $M_t^\infty$ such that
\be
\frac{{\mathbb E}^{\mathbb P}_t[S_T]}{{\mathbb E}^{\mathbb P}[S_T]} \xrightarrow{\rm L^1} M_t^\infty\quad \text{as} \quad T\rightarrow \infty.
\eel{PKL1}
Then the following results hold:\\
(i) The collection of random variables $(M_t^\infty)_{t\geq0}$ is a positive ${\mathbb P}$-martingale, and the family of martingales $(M_t^T)_{t\geq 0}$ converges to  the martingale $(M_t^\infty)_{t\geq0}$ in the semimartingale topology.\\
(ii) The long bond valuation process $(B_t^\infty)_{t\geq0}$ exists, and the roll-over strategies $(B_t^T)_{t\geq 0}$ converge to the long bond $(B_t^\infty)_{t\geq 0}$ in the semimartingale topology.\\
(iii) The pricing kernel possesses the long-term factorization
\be
S_t=\frac{1}{B_t^\infty}M_t^\infty.
\eel{ltf}
(iv) $T$-forward measures ${\mathbb Q}^T$ converge to the long forward measure ${\mathbb L}$ in total variation  on each ${\mathscr F}_t$, and ${\mathbb L}$ is equivalent to ${\mathbb P}$ on ${\mathscr F}_t$ with the Radon-Nikodym derivative $M_t^\infty$.
\end{theorem}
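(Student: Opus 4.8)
The plan is to treat the $L^1$-convergence hypothesis \eqref{PKL1} as the single input and to extract all four conclusions from it, isolating the upgrade from $L^1$/ucp convergence to convergence in the semimartingale topology as the one genuinely hard step. Everything else is soft.

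First I would establish that $(M_t^\infty)_{t\ge 0}$ is a positive $\mathbb{P}$-martingale. Positivity and $\mathbb{E}^{\mathbb{P}}[M_t^\infty]=1$ are immediate: the hypothesis supplies $M_t^\infty>0$ a.s., and $L^1$-convergence together with $\mathbb{E}^{\mathbb{P}}[M_t^T]=M_0^T=1$ for every $T\ge t$ gives $\mathbb{E}^{\mathbb{P}}[M_t^\infty]=\lim_{T}\mathbb{E}^{\mathbb{P}}[M_t^T]=1$. For the martingale property I would fix $s<t$ and pass to the limit $T\to\infty$ in the identity $\mathbb{E}^{\mathbb{P}}[M_t^T\mid\mathscr{F}_s]=M_s^T$, valid because each $M^T$ is a genuine $\mathbb{P}$-martingale on all of $[0,\infty)$, being $S_t B_t^T$ for the self-financing roll-over value $B_t^T$. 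Since conditional expectation is an $L^1$-contraction, the left side converges in $L^1$ to $\mathbb{E}^{\mathbb{P}}[M_t^\infty\mid\mathscr{F}_s]$ while the right side converges in $L^1$ to $M_s^\infty$, whence $\mathbb{E}^{\mathbb{P}}[M_t^\infty\mid\mathscr{F}_s]=M_s^\infty$.

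With the martingale property in hand, conclusion (iv) is essentially automatic. The consistent family $M_t^\infty\,\mathbb{P}|_{\mathscr{F}_t}$ defines a measure $\mathbb{L}=\mathbb{Q}^\infty$, equivalent to $\mathbb{P}$ on each $\mathscr{F}_t$ because $M_t^\infty>0$ a.s.\ and integrates to $1$; and since $\mathbb{Q}^T$ and $\mathbb{L}$ carry the $\mathbb{P}$-densities $M_t^T$ and $M_t^\infty$ on $\mathscr{F}_t$, their total-variation distance equals $\tfrac12\,\mathbb{E}^{\mathbb{P}}|M_t^T-M_t^\infty|\to 0$ by \eqref{PKL1}. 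Conclusion (iii) and the existence part of (ii) then follow by writing $B_t^T=M_t^T/S_t$ from the $T$-forward factorization \eqref{Tfactorization} and setting $B_t^\infty:=M_t^\infty/S_t$, a strictly positive semimartingale as the ratio of two positive semimartingales; this is exactly the long-term factorization \eqref{ltf}, and it delivers $B^T\to B^\infty$ pointwise in probability.

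The main obstacle is upgrading these convergences to the semimartingale (\'Emery) topology, i.e.\ proving $M^T\to M^\infty$ in $\mathcal{S}$; the statement for $B^T\to B^\infty$ then follows from $B^T=M^T\cdot S^{-1}$ with $S^{-1}$ a fixed semimartingale, using continuity of the stochastic integral and of quadratic covariation with respect to the integrator. The difficulty is that $L^1$-convergence of the fixed-time values of a martingale controls neither its quadratic variation nor bounded stochastic integrals against it, so ucp convergence by itself is too weak and one must rule out the obvious oscillatory counterexamples. My strategy would be to pass to the martingale differences and to exploit the two features special to this setting: the $M^T,M^\infty$ are strictly positive with common unit expectation, so by Scheff\'e $L^1$-convergence is equivalent to convergence in probability with uniform integrability at each fixed $t$; and \eqref{PKL1} holds simultaneously at every $t$. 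Concretely I would argue by subsequences — by Urysohn's criterion it suffices that every subsequence admit a further subsequence converging in $\mathcal{S}$ — extract along a diagonal a sub-subsequence converging a.s.\ at all rational times, and then invoke a convergence criterion for positive martingales in the \'Emery topology in the spirit of M\'emin's results on changes of measure and the semimartingale topology, converting simultaneous total-variation convergence of the density processes into $\mathcal{S}$-convergence. This last criterion is where the real work lies and is the step I expect to be most delicate.
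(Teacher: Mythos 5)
First, a point of comparison: this paper does not prove Theorem \ref{implication_L1} itself; it imports the statement from \citet{linetsky_2014long} and refers there for the proof, so your proposal has to be measured against that external argument. Your treatment of the soft parts coincides with it and is correct: passing to the limit in $\mathbb{E}^{\mathbb P}[M_t^T\mid\mathscr{F}_s]=M_s^T$ via the $L^1$-contractivity of conditional expectation to get the martingale property of $M^\infty$, identifying the total variation distance between $\mathbb{Q}^T$ and $\mathbb{L}$ on $\mathscr{F}_t$ with $\mathbb{E}^{\mathbb P}|M_t^T-M_t^\infty|$ up to a factor of $2$, and reading off the factorization \eqref{ltf} from $B_t^\infty:=M_t^\infty/S_t$.

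The gap is exactly where you say the real work lies, and your proposed strategy does not close it. Extracting a diagonal subsequence converging a.s.\ at all rational times buys you nothing toward \'Emery convergence: almost sure, ucp, or even $L^1$ convergence of martingales at fixed times places no direct constraint on the integrals $H\cdot(M^T-M^\infty)$ for predictable $|H|\le 1$, which is precisely what the semimartingale topology measures, and the ``convergence criterion for positive martingales in the spirit of M\'emin'' that you would then invoke is left unnamed --- it is the entire content of the step. The tool that actually does the job, and the substance of the lemma proved in \citet{linetsky_2014long}, is the weak-type $(1,1)$ maximal inequality for martingale transforms (Burkholder; see also M\'emin): for any martingale $N$ with $N_0=0$ and any predictable $H$ with $|H|\le 1$, one has $\mathbb{P}\bigl(\sup_{s\le t}|(H\cdot N)_s|>c\bigr)\le \tfrac{C}{c}\,\mathbb{E}^{\mathbb P}\bigl[|N_t|\bigr]$ for a universal constant $C$. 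Applied to $N=M^T-M^\infty$, this converts the hypothesis $\mathbb{E}^{\mathbb P}|M_t^T-M_t^\infty|\to 0$ directly, and uniformly over $|H|\le 1$, into convergence in the semimartingale topology on $[0,t]$ --- no subsequences, no Scheff\'e, no uniform integrability argument. Once $M^T\to M^\infty$ in $\mathcal{S}$ is secured this way, your reduction of $B^T\to B^\infty$ to multiplication by the fixed semimartingale $S^{-1}$ is workable (integration by parts plus localization to make the integrands locally bounded; everything here is continuous, so this is routine), and the remaining conclusions follow as you describe.
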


The process $B_t^\infty$ has the interpretation of the gross return earned starting from time zero up to
time $t$ on holding the zero-coupon bond of asymptotically
long maturity. The long bond is the numeraire asset under the long forward measure $\mathbb{L}$ since the pricing kernel becomes $1/B_t^\infty$ under $\mathbb{L}$.
The long-term factorization of the pricing kernel \eqref{ltf} decomposes it
into discounting at the rate of
return on the long bond and a martingale component encoding a further risk adjustment.

Suppose the condition \eqref{PKL1} in Theorem \ref{implication_L1} holds in the Brownian setting of this paper.
Then the long bond valuation process is an It\^{o} semimartingale with the representation
$$
dB_t^\infty=(r_t+ \sigma^\infty_t \cdot \lambda_t) B_t^\infty dt + B_t^\infty \sigma^\infty_t  \cdot dW_t^\mathbb{P}
$$
with some volatility process $\sigma^\infty_t$ such that the process $M_t^\infty=S_t B_t^\infty$ satisfying
$$
dM_t^\infty=-M_t^\infty \lambda_t^\infty   \cdot dW_t^\mathbb{P}
$$
with $\lambda_t^\infty=\lambda_t - \sigma^\infty_t$
is a martingale (the permanent component in the long-term factorization). Thus, the long-term factorization Eq.\eqref{ltf} in the Brownian setting yields a decomposition of the market price of Brownian risk
\be
\lambda_t=\sigma^\infty_t + \lambda_t^\infty
\eel{mprdecomposition}
into the volatility of the long bond $\sigma_t^\infty$ and the volatility $\lambda_t^\infty$ of the martingale  $M_t^\infty$. The change of probability measure from the data-generating measure ${\mathbb P}$ to the long forward measure ${\mathbb L}$ is accomplished via Girsanov's theorem  with the ${\mathbb L}$-Brownian motion
$W_t^\mathbb{L}=W_t^\mathbb{P}+ \int_0^t \lambda_s^\infty ds.$

\section{Long Term Factorization of Affine Pricing Kernels}
\label{exist_affine}

We assume that the underlying economy is described by a Markov process $X$. We further assume $X$ is an affine diffusion and the pricing kernel $S$ is exponential affine in $X$ and the time integral of $X$.
Affine diffusion models are widely used in continuous-time finance due to their analytical tractability (\citet{vasicek_1977equilibrium}, \citet{cox_1985_2}, \citet{duffie_1996}, \citet{duffie_2000}, \citet{dai_2000}, \citet{duffie_2003}).
We start with a brief summary of some of the key facts about affine diffusions. We refer the reader to \citet{filipovic_2009} for  details, proofs and references to the literature on affine diffusion.

The process we work with solves the following SDE on the state space $E=\mathbb{R}_+^m\times\mathbb{R}^n$ for some $m,n\geq 0$ with $m+n=d$, where
$\mathbb{R}_+^m=\big\{x\in \mathbb{R}^m : x_i\geq 0$ for $i=1,...,m\big\}$:
\be
d X_t=b(X_t)dt+\sigma(X_t)d W^{\mathbb{P}}_t,\quad X_0=x,
\eel{affinesde}
where $W^{\mathbb P}$ is a $d$-dimensional standard Brownian motion and the diffusion matrix $\alpha(x)=\sigma(x)\sigma(x)^\dagger$ (here $^\dagger$ denotes matrix transpose to differentiate it from superscript $^T$) and the drift vector $b(x)$ are both affine in $x$:
\be
\alpha(x)=a+\displaystyle{\sum_{i=1}^d}x_i\alpha_i,\quad
b(x)=b+\displaystyle{\sum_{i=1}^d}x_i\beta_i=b+Bx
\ee
for some $d\times d$-matrices $a$ and $\alpha_i$ and $d$-dimensional vectors $b$ and $\beta_i$, where we denote by $B=(\beta_1,...,\beta_d)$ the $d\times d$-matrix with $i$-th column vector $\beta_i$, $1\leq i\leq d$. The first $m$ coordinates of $X$ are CIR-type and are non-negative, while the last $n$ coordinates are OU-type.  Define the index sets
$\emph{I}=\{1,...,m\}$  and $\emph{J}=\{m+1,...,m+n\}$.
For any vector $\mu$ and matrix $\nu$, and index sets $\emph{M},\emph{N}\in \{I,J\}$, we denote by
$\mu_\emph{M}=(\mu_i)_{i\in \emph{M}},$ $\nu_{\emph{M}\emph{N}}=(\nu_{ij})_{i\in \emph{M},j\in \emph{N}}$
the respective sub-vector and sub-matrix. To ensure the process stays in the domain $E={\mathbb R}_+^m\times {\mathbb R}^n$, we need the following assumption (cf. \citet{filipovic_2009})
\begin{assumption}{\bf (Admissibility)}\\
(1) $a_{JJ}$ and $\alpha_{i,JJ}$ are symmetric positive semi-definite for all $i=1,2,...,m$,\\
(2) $a_{II}=0,$ $a_{IJ}=a_{JI}^\dagger=0$,\\
(3) $\alpha_j=0$ for $j\in J$,\\
(4) $\alpha_{i,kl}=\alpha_{i,lk}=0$ for $k\in I\backslash \{i\}$ for all $1\leq k,l\leq d,$\\
(5) $b_I\geq 0$, $B_{IJ}=0$, and $B_{II}$ has non-negative off-diagonal elements.
\label{admi_and_nonde}
\end{assumption}
The condition $b_I\geq 0$ on the constant term in the drift of the CIR-type components ensures that the process stays in the state space $E$.
Making a stronger assumption $b_I>0$ ensures that the process instantaneously reflects from the boundary $\partial E$ and re-enters the interior of the state space ${\rm int}E=\mathbb{R}_{++}^m\times\mathbb{R}^n,$ where $\mathbb{R}_{++}^m=\big\{x\in \mathbb{R}^m : x_i> 0$ for $i=1,...,m\big\}$.
For any parameters satisfying  Assumption \ref{admi_and_nonde}, there exists a unique strong solution of the SDE \eqref{affinesde} (cf. Theorem 8.1 of \citet{filipovic_2009}). Denote by ${\mathbb P}_x$ the law of the solution $X^x$ of the SDE \eqref{affinesde}  for $x\in E$, ${\mathbb P}_x(X_t\in A):={\mathbb P}(X^x_t\in A)$.
Then $P_t(x,A)={\mathbb P}_x(X_t\in A)$ defined for all $t\geq 0$, Borel subsets $A$ of $E$, and $x\in E$ defines a Markov transition semigroup $(P_t)_{t\geq 0}$  on the Banach space of Borel measurable bounded functions on $E$ by $P_tf(x):=\int_E f(y)P_t(x,dy)$. As shown in  \citet{duffie_2003}, this semigroup is {\em Feller}, i.e., it leaves the space of continuous functions vanishing at infinity invariant. Thus, the Markov process $((X_t)_{t\geq 0},({\mathbb P}_x)_{x\in E})$ is a {\em Feller process} on $E$. It has continuous paths in $E$ and has the strong Markov property (cf. \citet{yamada_1971}, Corollary 2, p.162). Thus, it is a Borel right process (in fact, a Hunt process).

We make the following assumption about the pricing kernel.
\begin{assumption}{\bf (Affine Pricing Kernel)}\label{assumption_affine_PK}
We assume that the pricing kernel is exponential-affine in $X$ and its time integral:
\be
S_t=e^{-\gamma t-u^\dagger (X_t-X_0)-\int_0^t \delta^\dagger X_s ds},
\eel{affine_pk}
where $\gamma$ is a scalar and $u$ and $\delta$ are $d$-vectors and $^\dagger$ denotes matrix transpose.
\end{assumption}
The pricing kernel
in this form is a positive multiplicative functional of the Markov process $X$.
The associated pricing operator ${\mathscr P}_t$ is defined by
$$
{\mathscr P}_tf(x)={\mathbb E}^{\mathbb P}_x[S_t f(X_t)]
$$
for a payoff $f$ of the Markov state. We refer the reader to Qin and Linetsky (2016a) for a detailed treatment of Markovian pricing operators.
The pricing kernel in the form \eqref{affine_pk} is called affine due to the following key result that shows that the term structure of pure discount bond yields is affine in the state vector $X$ (cf. \citet{filipovic_2009} Theorem 4.1).
\begin{proposition}
\label{affine_ZCB}
Let $T_0>0$. The following statements are equivalent: \\
(i) ${\mathbb E}^{\mathbb P}[S_{T_0}]<\infty$ for all fixed initial states $X_0=x\in {\mathbb R}_+^m\times {\mathbb R}^n$. \\
(ii) There exists a unique solution  $(\Phi(\cdot),\Psi(\cdot)):[0,T_0]\rightarrow {\mathbb R}\times {\mathbb R}^d$ of the following Riccati system of equations up to time $T_0$:
\be
\begin{split}
&\Phi^\prime(t)=-\frac{1}{2}\Psi_J(t)^\dagger a_{JJ}\Psi_J(t)+b^\dagger\Psi(t)+\gamma, \quad \Phi(0)=0,\\
&\Psi_i^\prime(t)=-\frac{1}{2}\Psi(t)^\dagger \alpha_{i}\Psi(t)+\beta_i^\dagger\Psi(t)+\delta_i,\quad i\in\emph{I},\\
&\Psi_J^\prime(t)=B_{JJ}^\dagger\Psi_J(t)+\delta_J,\quad \Psi(0)=u.\\
\end{split}
\eel{riccati_d}
In either case, the pure discount bond valuation processes (with unit payoffs) are exponential-affine in $X$:
\be
P_t^T=\mathbb{E}^{\mathbb P}_t[ S_T/S_t]= ({\mathscr P}_{T-t}1)(x)= P(T-t,X_t)=e^{-\Phi(T-t)-(\Psi(T-t)-u)^\dagger X^x_t}
\eel{representation}
for all $0\leq t\leq T\leq t+T_0$ and the SDE initial condition $x\in {\mathbb R}_+^m\times {\mathbb R}^n$.
\end{proposition}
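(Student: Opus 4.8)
The plan is to recognize Proposition \ref{affine_ZCB} as an affine-transform (Feynman--Kac) statement and to reduce it to the interplay between a linear Kolmogorov backward equation and the Riccati system \eqref{riccati_d}. Since $X_0=x$ is fixed, write $S_t=e^{u^\dagger x}\widetilde S_t$ with the reduced multiplicative functional $\widetilde S_t:=e^{-\gamma t-u^\dagger X_t-\int_0^t\delta^\dagger X_s\,ds}$; it then suffices to show $\mathbb{E}^{\mathbb P}_x[\widetilde S_T]=e^{-\Phi(T)-\Psi(T)^\dagger x}$ under either hypothesis. Multiplying by $e^{u^\dagger x}$ recovers \eqref{representation} at $t=0$, while the general-$t$ case follows from the Markov property, since $S_T/S_t=e^{-\gamma(T-t)-u^\dagger(X_T-X_t)-\int_t^T\delta^\dagger X_s\,ds}$ makes $\mathbb{E}^{\mathbb P}_t[S_T/S_t]$ the same exponential-affine function of $X_t$ evaluated at horizon $T-t$.

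Next I would record the triangular structure of \eqref{riccati_d}, which trivializes the local ODE theory. The $\Psi_J$ block is a decoupled linear system (its right-hand side involves only $\Psi_J$, because $B_{IJ}=0$ forces $\beta_j^\dagger\Psi=(B_{JJ}^\dagger\Psi_J)_j$ for $j\in J$), so $\Psi_J$ exists and is unique on all of $[0,\infty)$. The CIR-type block $\Psi_I$ then solves a quadratic system driven by the now-known, locally bounded $\Psi_J$; by Cauchy--Lipschitz it has a unique maximal solution on some $[0,t_+)$, $t_+\in(0,\infty]$, and $\Phi$ is recovered from $\Psi$ by a single quadrature. Hence the \emph{only} obstruction to solving \eqref{riccati_d} on $[0,T_0]$ is a finite-time explosion of $\Psi_I$, and uniqueness of the representation is automatic.

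For (ii)$\Rightarrow$(i) together with the representation, I would take the exponential-affine candidate $g(s,x)=e^{-\Phi(s)-\Psi(s)^\dagger x}$ and form $Z_t:=e^{-\gamma t-\int_0^t\delta^\dagger X_s\,ds}\,g(T-t,X_t)$ on $[0,T]$ for any $T\le T_0$. Itô's formula, combined with $\alpha(x)=a+\sum_i x_i\alpha_i$, $b(x)=b+Bx$ and the admissibility relations $a_{II}=a_{IJ}=0$ (which collapse $\Psi^\dagger a\Psi$ to $\Psi_J^\dagger a_{JJ}\Psi_J$), $\alpha_j=0$ for $j\in J$, and $B_{IJ}=0$, splits the drift of $Z$ into a constant term and terms linear in each $x_i$; the three groups of equations in \eqref{riccati_d} are \emph{exactly} the conditions annihilating these coefficients. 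Thus $Z$ is a nonnegative local martingale with $Z_0=g(T,x)$ and $Z_T=\widetilde S_T$, hence a supermartingale, which already yields $\mathbb{E}^{\mathbb P}_x[\widetilde S_T]\le g(T,x)<\infty$ and establishes (i). To promote this to the equality in \eqref{representation} I would localize at $\tau_n=\inf\{t:\|X_t\|\ge n\}$, so that $\mathbb{E}^{\mathbb P}_x[Z_{T\wedge\tau_n}]=g(T,x)$, and let $n\to\infty$; since $Z_{T\wedge\tau_n}\to\widetilde S_T$ almost surely, the reverse inequality, and hence equality, follows once uniform integrability of $\{Z_{T\wedge\tau_n}\}_n$ is secured. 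Here the boundedness of $(\Phi,\Psi)$ on the closed interval $[0,T]\subseteq[0,T_0]$ is what lets me dominate $g(T-t,X_t)$ and obtain the required uniform integrability.

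Finally, (i)$\Rightarrow$(ii) I would prove by contraposition. Suppose $\Psi_I$ explodes at some $t_+\le T_0$; a sign analysis of the scalar CIR-type Riccati equations shows the explosion is to $-\infty$, whence $-\Phi(T)-\Psi(T)^\dagger x\to+\infty$ and $g(T,x)\to\infty$ as $T\uparrow t_+$. Combined with the equality $\mathbb{E}^{\mathbb P}_x[\widetilde S_T]=g(T,x)$ for $T<t_+$ from the previous step, this shows the transform diverges as $T\uparrow t_+\le T_0$. It then remains to deduce $\mathbb{E}^{\mathbb P}_x[\widetilde S_{T_0}]=\infty$, using the standard fact that the set of horizons on which the transform is finite is an interval; this can be seen via positivity in the semigroup identity $\mathbb{E}^{\mathbb P}_x[\widetilde S_{T_0}]=\mathbb{E}^{\mathbb P}_x[\widetilde S_T\,\mathbb{E}^{\mathbb P}_{X_T}[\widetilde S_{T_0-T}]]$. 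I expect this last direction---pinning the maximal existence time $t_+$ of \eqref{riccati_d} to the finiteness threshold of the transform, together with the true-martingale/uniform-integrability upgrade of the previous paragraph---to be the main obstacle, whereas the Itô computation and the local ODE analysis are routine by comparison.
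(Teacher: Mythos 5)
The paper does not actually prove this proposition; it is quoted from \citet{filipovic_2009} (Theorem 4.1), so your sketch must be judged against the standard proof of that cited result. Its skeleton you reproduce correctly: the reduction to $t=0$ via the Markov property, the triangular structure of \eqref{riccati_d} (decoupled linear $\Psi_J$ block, quadratic $\Psi_I$ block with non-positive leading coefficients, $\Phi$ by quadrature), and the It\^{o} computation showing that the three groups of Riccati equations are exactly the conditions making $Z_t=e^{-\gamma t-\int_0^t\delta^\dagger X_s\,ds}g(T-t,X_t)$ a local martingale are all sound, and the supermartingale inequality $\mathbb{E}_x[\widetilde S_T]\le g(T,x)$ does follow.

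However, the two steps you yourself flag as delicate are genuinely not closed by the arguments you offer. First, the uniform integrability of $\{Z_{T\wedge\tau_n}\}_n$: boundedness of $(\Phi,\Psi)$ on $[0,T]$ does not let you dominate $e^{-\Psi(T-t)^\dagger X_{t\wedge\tau_n}}$ by an integrable random variable, because $X$ is unbounded; a de la Vall\'ee-Poussin bound $\sup_n\mathbb{E}[Z_{T\wedge\tau_n}^{1+\epsilon}]<\infty$ would essentially require solvability of the Riccati system for the exponent $(1+\epsilon)u$, which is not granted. Proving that this exponentially affine local martingale is a true martingale whenever the Riccati system is solvable on $[0,T]$ is precisely the nontrivial content of the cited theorem (see also Mayerhofer, Muhle-Karbe and Smirnov, \emph{Stochastic Process.\ Appl.}\ 2011); the known proofs approximate $u$ from the interior of the set of admissible exponents or argue by analytic continuation in the exponent, not by domination. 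Second, in (i)$\Rightarrow$(ii) you need that finiteness of $\mathbb{E}_x[\widetilde S_{T_0}]$ forces $\sup_{T<t_+}\mathbb{E}_x[\widetilde S_T]<\infty$; the tower identity $\mathbb{E}_x[\widetilde S_{T_0}]=\mathbb{E}_x[\widetilde S_T\,h(X_T)]$ with $h(y)=\mathbb{E}_y[\widetilde S_{T_0-T}]$ yields this only if $h$ is bounded away from zero, which fails in general since $h(y)$ can vanish as $\|y\|\to\infty$; so the claim that the set of finite horizons is an interval is unsupported as stated. (A smaller point: "explosion of $\Psi_I$ is to $-\infty$" needs a comparison argument for the cooperative system, using that $B_{II}$ has non-negative off-diagonal entries, to rule out a component being dragged to $+\infty$.) Both gaps sit exactly where the statement stops being an exercise; since the paper outsources them to \citet{filipovic_2009}, the cleanest course is to do the same rather than to reprove them.
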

Since in this paper our standing assumption is that ${\mathbb E}^{\mathbb P}[S_t]<\infty$ for all $t$, in this case the Riccati ODE system has solutions $\Psi(t)$ and $\Phi(t)$ for all $t$, and the bond pricing function entering the expression \eqref{representation} for the zero-coupon bond process
\be
P(t,x)=({\mathscr P}_t1)(x)=e^{-\Phi(t)-(\Psi(t)-u)^\dagger x}
\eel{bondfunction}
is defined for all $t\geq 0$ and $x\in E$.

We next show that an affine pricing kernel always possesses the risk-neutral factorization  with the affine short rate function.
\begin{theorem}{\bf (Risk-Neutral Factorization of Affine Pricing Kernels)}\label{RN_affine}
Suppose $X$ satisfies Assumption \ref{admi_and_nonde} and the pricing kernel satisfies Assumption \ref{assumption_affine_PK} together with the assumption that ${\mathbb E}^{\mathbb P}_x[S_t]<\infty$ for all $t\geq 0$ and every fixed initial state $X_0=x\in {\mathbb R}_+^m\times {\mathbb R}^n$.\\
(i) Then the pricing kernel admits the risk-neutral factorization
$$S_t=e^{-\int_0^t r(X_s)ds}M^0_t$$
with the affine short rate
\be
r(x)=g+h^\dagger x,\,
\eel{affineshortr}
with
\be
g=\gamma-\frac{1}{2}u_J^\dagger a_{JJ} u_J+ b^\dagger u, \, h_i=\delta_i-\frac{1}{2}u^\dagger\alpha_i u+\beta_i^\dagger u,\, i\in I, \, h_J=\delta_J+B_{JJ}^\dagger u_J
\eel{gh}
and the martingale
$$M^0_t=e^{-\int_0^t\lambda_s^\dagger dW_s^{\mathbb P}-\frac{1}{2}\int_0^t \|\lambda_s\|^2ds}$$
with the market price of Brownian risk (column $d$-vector)
\be
\lambda_t = \sigma(X_t)^\dagger u,
\eel{mprvaffine}
where $\sigma(x)$ is the volatility matrix of the state variable $X$ in the SDE \eqref{affinesde} and $$\|\lambda_t\|^2=\lambda_t^\dagger\lambda_t=u^\dagger \alpha(X_t)u.$$ \\
(ii) Under the risk-neutral measure ${\mathbb Q}$ defined by the martingale $M$, the dynamics of $X$ reads
\be
d X_t=(b(X_t)-\alpha(X_t)u) dt+\sigma(X_t)d W^{\mathbb{Q}}_t,
\eel{affineq}
where $W^{\mathbb Q}_t=W_t^{\mathbb P} +\int_0^t \lambda_s ds$
is the standard Brownian motion under ${\mathbb Q}$.
\end{theorem}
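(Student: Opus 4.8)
The plan is to obtain the factorization as a pure Itô-calculus identity, read off the two coefficient vectors by matching, and then isolate the one genuinely measure-theoretic point — the true-martingale property of $M^0$ — for a separate non-explosion argument. First I would take logarithms in \eqref{affine_pk}, so that $\log S_t=-\gamma t-u^\dagger(X_t-X_0)-\int_0^t\delta^\dagger X_s\,ds$ is a semimartingale whose only martingale part comes through $X$. Differentiating and substituting the SDE \eqref{affinesde} gives $d\log S_t=-\big(\gamma+u^\dagger b(X_t)+\delta^\dagger X_t\big)dt-u^\dagger\sigma(X_t)\,dW_t^{\mathbb P}$. Applying Itô's formula to $S_t=e^{\log S_t}$ together with $d\langle\log S\rangle_t=u^\dagger\sigma(X_t)\sigma(X_t)^\dagger u\,dt=u^\dagger\alpha(X_t)u\,dt$ yields
$$\frac{dS_t}{S_t}=-\Big(\gamma+u^\dagger b(X_t)+\delta^\dagger X_t-\tfrac12 u^\dagger\alpha(X_t)u\Big)dt-u^\dagger\sigma(X_t)\,dW_t^{\mathbb P}.$$
Matching this against the generic form $dS_t/S_t=-r(X_t)\,dt-\lambda_t\cdot dW_t^{\mathbb P}$ of Section \ref{Brownian} identifies $\lambda_t=\sigma(X_t)^\dagger u$ (so $\|\lambda_t\|^2=u^\dagger\alpha(X_t)u$) and $r(X_t)=\gamma+u^\dagger b(X_t)+\delta^\dagger X_t-\tfrac12 u^\dagger\alpha(X_t)u$.

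Next I would insert the affine data $b(x)=b+Bx$ and $\alpha(x)=a+\sum_i x_i\alpha_i$ and collect constant and linear terms to recover \eqref{gh}. The constant term is $g=\gamma+b^\dagger u-\tfrac12 u^\dagger a u$, which reduces to the stated form once Assumption \ref{admi_and_nonde}(2) ($a_{II}=0$, $a_{IJ}=a_{JI}^\dagger=0$) gives $u^\dagger a u=u_J^\dagger a_{JJ}u_J$. The linear coefficient is $h_i=\beta_i^\dagger u+\delta_i-\tfrac12 u^\dagger\alpha_i u$, using $(B^\dagger u)_i=\beta_i^\dagger u$; for $i\in I$ this is already the claimed expression, while for $i\in J$ Assumption \ref{admi_and_nonde}(3) ($\alpha_j=0$, $j\in J$) annihilates the quadratic term and Assumption \ref{admi_and_nonde}(5) ($B_{IJ}=0$) collapses $(B^\dagger u)_J$ to $B_{JJ}^\dagger u_J$, giving $h_J=\delta_J+B_{JJ}^\dagger u_J$. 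Part (ii) is then immediate from Girsanov's theorem: writing $W_t^{\mathbb Q}=W_t^{\mathbb P}+\int_0^t\lambda_s\,ds$ and substituting $dW_t^{\mathbb P}=dW_t^{\mathbb Q}-\lambda_t\,dt$ into \eqref{affinesde} converts the drift into $b(X_t)-\sigma(X_t)\lambda_t=b(X_t)-\alpha(X_t)u$, which is \eqref{affineq}.

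The main obstacle is not this algebra but showing that $M^0$ is a genuine $\mathbb P$-martingale rather than merely a strictly local one. As a nonnegative stochastic exponential $M^0$ is automatically a local martingale, hence a supermartingale, so it suffices to establish $\mathbb E^{\mathbb P}_x[M_t^0]=1$ for all $t$; because $\|\lambda_s\|^2=u^\dagger\alpha(X_s)u$ is only affine (not bounded) in $X$, Novikov's condition cannot be expected to hold and a structural argument is needed. I would argue by non-explosion of the candidate $\mathbb Q$-dynamics: the tilted drift $b(x)-\alpha(x)u=(b-au)+\big(B-[\alpha_1u\,|\cdots|\,\alpha_d u]\big)x$ is again affine, and Assumption \ref{admi_and_nonde} is preserved under the tilt — by part (2) $(au)_I=0$, so the CIR-type constant drift $b_I\ge0$ is unchanged; by part (3) $\alpha_i=0$ for $i\in J$, so the new $IJ$ drift block stays zero; and by parts (4)–(5) the off-diagonal entries of the new $II$ drift block coincide with those of $B_{II}$ and remain nonnegative. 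Hence \eqref{affineq} admits a unique non-exploding strong solution on $E$, and by the standard criterion that the exponential local martingale of $-\int\lambda^\dagger\,dW^{\mathbb P}$ is a true martingale whenever the SDE with the tilted drift does not explode (localize by exit times $\tau_n$ of compacts, identify $M^0_{t\wedge\tau_n}$ with the density of the law of the stopped $\mathbb Q$-diffusion, and pass to the limit using $\mathbb Q(\tau_n\to\infty)=1$), $M^0$ is a genuine martingale, which legitimizes both the factorization in (i) and the measure change in (ii).
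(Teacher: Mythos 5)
Your proposal is correct, and the algebra (the It\^{o} expansion of $\log S_t$, the identification of $\lambda_t=\sigma(X_t)^\dagger u$, and the reduction of $g$ and $h$ to the stated forms via Assumption \ref{admi_and_nonde}(2)--(5)) matches what the paper needs; but you prove the one delicate point --- that $M^0$ is a true martingale rather than a strict local martingale --- by a genuinely different route. The paper's proof never touches non-explosion: it observes that $M^0_t=S_te^{\int_0^t r(X_s)ds}$ is again of the exponential-affine form \eqref{affine_pk} with $\gamma$ replaced by $\gamma-g$ and $\delta$ by $\delta-h$, and that $g,h$ are defined precisely so that the shifted Riccati system \eqref{riccati_d} admits the constant solution $\Phi\equiv 0$, $\Psi\equiv u$; Proposition \ref{affine_ZCB} (the affine transform formula) then yields $\mathbb{E}^{\mathbb P}_t[M^0_T/M^0_t]=e^{-\Phi(T-t)-(\Psi(T-t)-u)^\dagger X_t}=1$ directly, which is the martingale property. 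Your argument instead verifies that the tilted drift $b(x)-\alpha(x)u$ preserves admissibility (your checks that $(au)_I=0$, that the $IJ$ block stays zero, and that the off-diagonal $II$ entries are untouched because $\alpha_{i,kl}=0$ for $k\in I\setminus\{i\}$ are all correct), invokes non-explosion of the tilted affine SDE, and applies the standard localization criterion for stochastic exponentials. Both are complete; the paper's route is shorter because Proposition \ref{affine_ZCB} is already in hand and encapsulates the integrability issue inside the Riccati system, while yours is more self-contained at this point, makes the mechanism (equivalence of the $\mathbb P$- and $\mathbb Q$-martingale problems) transparent, and would survive outside the affine class where no transform formula is available. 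The only thing to tighten is the citation of the non-explosion-implies-martingale criterion, which you use as a black box; it is standard but should be referenced explicitly.
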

\begin{proof}
(i) Define a process $M_t^0:=S_te^{\int_0^t r(X_s)ds}$. It is also in the form of Eq.\eqref{affine_pk} with $\gamma$ replaced by $\gamma-g$ and $\delta$ replaced by $\delta-h$. Thus, Proposition \ref{affine_ZCB} also holds if we replace $S_t$ with $M_t^0$,  replace $\gamma$ with $\gamma-g$ and replace $\delta$ with $\delta-h$, i.e. $\mathbb{E}_t^\mathbb{P}[M_T/M_t]=e^{-\Phi(T-t)-(\Psi(T-t)-u)^\dagger X^x_t},$ where
\be
\begin{split}
&\Phi^\prime(t)=-\frac{1}{2}\Psi_J(t)^\dagger a_{JJ}\Psi_J(t)+b^\dagger\Psi(t)+\gamma-g, \quad \Phi(0)=0,\\
&\Psi_i^\prime(t)=-\frac{1}{2}\Psi(t)^\dagger \alpha_{i}\Psi(t)+\beta_i^\dagger\Psi(t)+\delta_i-h_i,\quad i\in\emph{I},\\
&\Psi_J^\prime(t)=B_{JJ}^\dagger\Psi_J(t)+\delta_J-h_J,\quad \Psi(0)=u.\\
\end{split}
\ee
With the choice of $g$ and $h$ in Eq.\eqref{gh}, the solution to the above ODE is $\Phi(t)=0$ and $\Psi(0)=u$, which implies $\mathbb{E}_t^\mathbb{P}[M_T/M_t]=1$. This shows that $M_t^0$ is a martingale. Furthermore, using the SDE for the affine state $X$, we can cast $M_t^0$ in the exponential martingale form $e^{-\int_0^t \lambda_s^\dagger dW_s^{\mathbb{P}}-\frac{1}{2}\int_0^t \|\lambda_s\|^2ds}$. with $\lambda_t$ given in \eqref{mprvaffine}.

\noindent(ii) The SDE for $X$ under $\mathbb{Q}$ follows from Girsanov's Theorem. $\Box$
\end{proof}

We next turn to the long term factorization of the affine pricing kernel.
\begin{theorem}{\bf (Long Term Factorization of Affine Pricing Kernels)}
\label{affine_long}
Suppose the solution $\Psi(t)$ of the Riccati ODE \eqref{riccati_d} converges to a fixed point $v\in {\mathbb R}^d$:
\be
\lim_{t\rightarrow\infty}\Psi(t)=v.
\eel{psi_converge}
Then the following results hold.\\
(i) Condition Eq.\eqref{PKL1} is satisfied and, hence, all results in Theorem \ref{implication_L1} hold. \\
(ii) The long bond is given by
\be
B_t^\infty=e^{\lambda t}\frac{\pi(X_t)}{\pi(X_0)},
\eel{long_bond_affine}
where
\be
\pi(x)=e^{(u-v)^\dagger x}
\eel{affineeigen}
is the positive exponential-affine eigenfunction of the pricing operator
${\mathscr P}_t$
$$
{\mathscr P}_t \pi(x)=e^{-\lambda t}\pi(x)
$$
with the eigenvalue $e^{-\lambda t}$ with
\be
\lambda=\gamma-\frac{1}{2}v_J^\dagger a_{JJ}v_J+ b^\dagger v
\eel{affineeigenv}
interpreted as the limiting long-term zero-coupon yield:
\be
\lim_{t\rightarrow \infty}\frac{-\ln P(t,x)}{t}=\lambda
\eel{asymptyield}
for all $x$.\\
(iii) The long bond has the ${\mathbb P}$-measure dynamics:
\be
dB_t^\infty = (r(X_t)+(\sigma^\infty_t)^\dagger \lambda_t )B_t^\infty dt + B_t^\infty (\sigma_t^\infty)^\dagger  dW_t^{\mathbb P},
\ee
where the (column vector) volatility of the long bond  is given by:
\be
\sigma_t^\infty=\sigma(X_t)^\dagger(u-v).
\eel{sigmainf}
(iv) The martingale component in the long-term factorization of the PK  $M^\infty_t=S_t B_t^\infty$ can be written in the form
\be
M^\infty_t=e^{-\int_0^t (\lambda_s^\infty)^\dagger dW_s^{\mathbb P}-\frac{1}{2}\int_0^t \|\lambda_s^\infty\|^2 ds},
\eel{minfty}
where
\be
\lambda^\infty_t=\lambda_t - \sigma_t^\infty=\sigma(X_t)^\dagger v.
\eel{gammainfty}
(v) The long-term decomposition of the market price of Brownian risk
is given by:
\be
\lambda_t = \sigma_t^\infty + \lambda^\infty_t,
\ee
where $\sigma_t^\infty$ is the volatility of the long bond \eqref{sigmainf} and $\lambda_t^\infty$ given in \eqref{gammainfty} defines the martingale \eqref{minfty}.\\
(vi) Under the long forward measure $\mathbb{L}$ the state vector $X_t$ solves the following SDE
\be
dX_t=(b(X_t)-\alpha(X_t)v)dt+\sigma(X_t)dW_t^\mathbb{L},
\eel{affinel}
where
$W_t^\mathbb{L}=W_t^\mathbb{P}+\int_0^t \lambda_s^\infty ds$
is the d-dimensional Brownian motion under $\mathbb{L}$, and the long bond has the ${\mathbb L}$-measure dynamics:
\be
dB_t^\infty = (r(X_t)+\|\sigma_s^\infty\|^2 )B_t^\infty dt + B_t^\infty (\sigma_t^\infty)^\dagger  dW_t^{\mathbb L}.
\ee
\end{theorem}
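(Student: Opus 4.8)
The plan is to reduce the entire theorem to the abstract result in Theorem~\ref{implication_L1} by verifying its hypothesis \eqref{PKL1}, and then to read off parts (iii)--(vi) from the general Brownian decomposition together with one It\^o computation and Girsanov's theorem. Since $S_0=1$ and $\mathbb{E}^{\mathbb P}[S_T]=\mathbb{E}^{\mathbb P}_x[S_T]=P(T,x)$ while $\mathbb{E}^{\mathbb P}_t[S_T]=S_t P(T-t,X_t)$, I would first write the normalized conditional expectation explicitly using the affine bond formula \eqref{bondfunction}:
\[
\frac{\mathbb{E}^{\mathbb P}_t[S_T]}{\mathbb{E}^{\mathbb P}[S_T]}=S_t\,e^{\Phi(T)-\Phi(T-t)}\,e^{-(\Psi(T-t)-u)^\dagger X_t+(\Psi(T)-u)^\dagger x}.
\]
Using the hypothesis $\Psi(t)\to v$ together with $\Phi'(t)\to\lambda$ (so that $\Phi(T)-\Phi(T-t)\to\lambda t$ by a Ces\`aro/L'H\^opital argument), the right-hand side converges almost surely to $S_t e^{\lambda t}\pi(X_t)/\pi(X_0)$ with $\pi$ as in \eqref{affineeigen}, which is exactly $M_t^\infty = S_t B_t^\infty$ from \eqref{long_bond_affine}. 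This gives the pointwise limit for free; the real work is to upgrade it to $L^1$.

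The crux is the following substitution trick, which simultaneously establishes the eigenfunction property in part (ii) and supplies the ingredient needed to close the $L^1$ gap. I would observe that $S_t\pi(X_t)/\pi(X_0)=e^{-\gamma t-v^\dagger(X_t-X_0)-\int_0^t\delta^\dagger X_s\,ds}$ is again an affine pricing kernel of the form \eqref{affine_pk}, but with $u$ replaced by $v$. Running Proposition~\ref{affine_ZCB} for this kernel means running the Riccati system \eqref{riccati_d} from the initial condition $v$; because \eqref{psi_converge} forces $v$ to be a fixed point of the $\Psi$-equations, the solution is frozen at $\Psi\equiv v$, whence $\Phi(t)=\lambda t$ with $\lambda$ given by \eqref{affineeigenv}. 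This yields $\mathscr{P}_t\pi(x)=\mathbb{E}^{\mathbb P}_x[S_t\pi(X_t)]=e^{-\lambda t}\pi(x)$, i.e. the eigenfunction identity, and in particular $\mathbb{E}^{\mathbb P}[M_t^\infty]=1$. Since each $M_t^T=\mathbb{E}^{\mathbb P}_t[S_T]/\mathbb{E}^{\mathbb P}[S_T]$ is a nonnegative martingale with $\mathbb{E}^{\mathbb P}[M_t^T]=1$, almost sure convergence to a limit with the same unit mass gives $L^1$ convergence by Scheff\'e's lemma. This verifies \eqref{PKL1}, so Theorem~\ref{implication_L1} applies and delivers parts (i)--(ii) in full, with the limiting yield \eqref{asymptyield} following from $\Phi(t)/t\to\lambda$ and boundedness of $\Psi(t)-u$.

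For parts (iii)--(vi) I would compute directly. Applying It\^o's formula to $B_t^\infty=e^{\lambda t}e^{(u-v)^\dagger(X_t-X_0)}$ using the SDE \eqref{affinesde} reads off the diffusion coefficient $\sigma(X_t)^\dagger(u-v)=\sigma_t^\infty$ of \eqref{sigmainf}; the drift then equals $r(X_t)+(\sigma_t^\infty)^\dagger\lambda_t$ either by invoking the general Brownian representation recorded after Theorem~\ref{implication_L1} or, equivalently, by checking the affine drift identity, whose constant and linear coefficients in $X_t$ vanish exactly on account of the fixed-point equations for $v$ and the definitions \eqref{gh}, \eqref{affineeigenv} of $g$, $h$, $\lambda$. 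Part (iv) follows from $\lambda_t^\infty=\lambda_t-\sigma_t^\infty=\sigma(X_t)^\dagger u-\sigma(X_t)^\dagger(u-v)=\sigma(X_t)^\dagger v$ and a parallel It\^o check that $M_t^\infty$ is driftless (again by the fixed-point equations), placing it in the exponential form \eqref{minfty}; part (v) is then the immediate algebraic identity $\sigma_t^\infty+\lambda_t^\infty=\sigma(X_t)^\dagger u=\lambda_t$. Finally, part (vi) is Girsanov's theorem: with $M_t^\infty$ the density of $\mathbb{L}$ and $W_t^{\mathbb L}=W_t^{\mathbb P}+\int_0^t\lambda_s^\infty\,ds$, substituting $dW_t^{\mathbb P}=dW_t^{\mathbb L}-\sigma(X_t)^\dagger v\,dt$ into the SDE for $X$ produces the drift adjustment $-\alpha(X_t)v$ of \eqref{affinel}, and the same substitution in the $\mathbb P$-dynamics of $B_t^\infty$ turns its drift into $r(X_t)+\|\sigma_t^\infty\|^2$.

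The step I expect to be the main obstacle is the $L^1$ upgrade in part (i): almost sure convergence of $M_t^T$ is immediate from the Riccati limit, but integrability is not, and a naive dominated-convergence bound is awkward because the exponents are unbounded in $X_t$. The clean resolution is the Scheff\'e argument above, which is why establishing the eigenfunction identity $\mathbb{E}^{\mathbb P}[M_t^\infty]=1$ via the fixed-point substitution is the true engine of the proof; everything else is It\^o calculus and Girsanov bookkeeping that the admissibility conditions and the fixed-point equations make go through.
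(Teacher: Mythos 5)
Your proposal is correct, and its overall architecture mirrors the paper's: establish that $v$ is a stationary point of the Riccati system, use the ``restart at $v$'' substitution in Proposition \ref{affine_ZCB} to get the eigenfunction identity ${\mathscr P}_t\pi=e^{-\lambda t}\pi$ (hence $\mathbb{E}^{\mathbb P}[M_t^\infty]=1$), verify \eqref{PKL1}, invoke Theorem \ref{implication_L1}, and finish (iii)--(vi) with It\^{o} and Girsanov. Where you genuinely depart from the paper is at the step you correctly identify as the crux, the upgrade from almost sure to $L^1$ convergence of $M_t^T$. The paper changes measure to $\mathbb{Q}^\pi$, rewrites \eqref{PKL1} as $\mathbb{E}^{\mathbb{Q}^\pi}\bigl[\bigl|P_t^T/(P_0^TB_t^\infty)-1\bigr|\bigr]\to 0$, dominates the integrand by a finite sum of exponentials $e^{k^\dagger X_t}$ with $k_i=\pm\epsilon$, and appeals to Theorem 4.1 of \citet{filipovic_2009} for finiteness of small exponential moments of the affine process under $\mathbb{Q}^\pi$ before applying dominated convergence. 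You instead apply Scheff\'{e}'s lemma directly under $\mathbb{P}$: the $M_t^T$ are nonnegative with $\mathbb{E}^{\mathbb P}[M_t^T]=1$ for every $T$, they converge almost surely to $M_t^\infty$ (a deterministic consequence of $\Psi(T)\to v$ and $\Phi(T)-\Phi(T-t)=\int_{T-t}^T\Phi'(s)\,ds\to\lambda t$), and the eigenfunction identity gives $\mathbb{E}^{\mathbb P}[M_t^\infty]=1$, so convergence of the means forces $L^1$ convergence. This is a valid and arguably cleaner argument: it dispenses entirely with the exponential-moment estimates and the change of measure, at the cost of leaning on the martingale/eigenfunction property of $M_t^\infty$ as the load-bearing input --- which you establish exactly as the paper does. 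The remaining parts are handled at the same level of detail as the paper's proof.
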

\begin{proof}
Since the solution of the Riccati ODE $\Psi(t)$ converges to a constant as $t\rightarrow \infty$, the right hand side of Eq.\eqref{riccati_d} also converges to a constant. This implies that $\Psi'(t)$ also converges to a constant. This constant must vanish, otherwise $\Psi(t)$ cannot converge to a constant. Thus, the right hand side of Eq.\eqref{riccati_d} also converges to zero. All these imply that $\Psi(t)=v$ is a stationary solution of the Riccati equation Eq.\eqref{riccati_d}.
Applying Proposition \ref{affine_ZCB} to the affine kernel of the form $1/B_t^\infty$, where $B_t^\infty$ is the process defined in \eqref{long_bond_affine}, it then follows that $\pi(x)$ defined in Eq.\eqref{affineeigen} is an eigenfunction of the pricing operator with the eigenvalue \eqref{affineeigenv}.
We can then verify that
$$M_t^\infty := S_t e^{\lambda t}\frac{\pi(X_t)}{\pi(X_0)}$$
is a martingale (with $M_0^\infty=1$).
We can use it to define a new probability measure
$$\mathbb{Q}^\pi|_{\mathscr{F}_t}:=M_t^\infty\mathbb{P}|_{\mathscr{F}_t}$$
associated with the eigenfunction $\pi(x)$.
The dynamics of $X_t$ under $\mathbb{Q}^\pi$ follows from Girsanov's Theorem. We stress that $\pi(x)$ is the eigenfunction of the pricing semigroup operator, rather than merely an eigenfunction of the generator. It is generally possible for an eigenfunction of the generator to fail to be an eigenfunction of the semigroup. That case will lead to a mere local martingale. In our case, $\pi(x)$ is an eigenfunction of the semigroup by construction, and the process $M_t^\infty$ is a martingale, rather than a mere local martingale.

We now show that the condition \eqref{PKL1} holds under our assumptions in Theorem 3.2.
We first re-write it under the probability measure $\mathbb{Q}^{\pi}$:
\be
\lim_{T\rightarrow\infty}\mathbb{E}^{\mathbb{Q}^\pi}\left[\left|\frac{P_t^T}{P_0^TB_t^\infty}-1\right|\right]=0.
\eel{BHS_L1}
We will now verify that this indeed holds under our assumptions.
First observe that by Eq.\eqref{representation}:
\be
\frac{P_t^T}{P_0^TB_t^\infty}=e^{-\lambda t-(\Phi(T-t)-\phi(T))-(\Psi(T-t)-v)^\dagger (X_t- X_0)}.
\ee
Since $\lim_{T\rightarrow\infty}\Psi(T)=v$ and $\lim_{T\rightarrow\infty}\Phi^\prime(T)=\lambda$, we have that
$$\lim_{T\rightarrow\infty}\frac{P_t^T}{P_0^TB_t^\infty}=1$$ almost surely.
Next, we show $L^1$ convergence. First, we observe that for any $\epsilon>0$ there exists $T_0$ such that for all $T>T_0$ $$|\Psi_i(T-t)-v_i|\leq\epsilon$$ for all $i\in I$ and $$e^{-\lambda t-(\Phi(T-t)-\phi(T))+(\Psi(T)-v)^\dagger X_0}\leq 1+\epsilon.$$
Thus, $$\left|\frac{P_t^T}{P_0^TB_t^\infty}-1\right|\leq1+\left|\frac{P_t^T}{P_0^TB_t^\infty}\right|\leq1+(1+\epsilon)\sum_{k_i=\pm\epsilon}e^{k^\dagger X_t}.$$
Since $X_t$ remains affine under $\mathbb{Q}^\pi$, by Theorem 4.1 of \citet{filipovic_2009} there exists $\epsilon>0$ such that $e^{k^\dagger X_t}$ is integrable under $\mathbb{Q}^\pi$ for all vectors $k$ such that $k_i=\pm\epsilon$. Thus, by the Dominated Convergence Theorem, Eq.\eqref{BHS_L1} holds. This proves (i) and (ii) (Eq.\eqref{asymptyield} follows from Eq.\eqref{bondfunction} and the fact $\Phi'(t)\rightarrow\lambda$ as $t\rightarrow\infty$).  (iii) follows from Eq.\eqref{long_bond_affine} and Ito's formula.

To prove (iv), we note that by Theorem \ref{implication_L1} $M_t^\infty$ is a  martingale. By It\^{o}'s formula, its volatility is $-\lambda_t^\infty$. This proves (iv). Part (v) follows from Eq.\eqref{gammainfty}. To prove (vi), first note that Eq.\eqref{minfty} and Girsanov's theorem implies that $W_t^\mathbb{L}=W_t^\mathbb{P}+\int_0^t\lambda_s^\infty ds$ is an $\mathbb{L}$-Brownian motion. The dynamics of $X_t$ and $B_t^\infty$ under $\mathbb{L}$ then follows. $\Box$
\end{proof}

The economic meaning of Theorem 3 is that the existence of a fixed point $v$ of the solution to the Riccati equation is sufficient for existence of the long term limit. The fixed point $v$ itself identifies the volatility of the long bond in Eq.(17) and the long-term zero-coupon yield in Eq.(16) via the principal eigenvalue (15).

We note that the condition in Theorem 3.2 of \citet{linetsky_2014long} is automatically satisfied in affine models. Indeed, from Eq.\eqref{bondfunction} when the Riccati equation has a fixed point $v$, from Theorem 3.2 in this paper we have
$$
\lim_{T\rightarrow \infty}\frac{P(T-t,x)}{P(T,x)}=e^{\lambda t},
$$
and we can write $P(t,x)=e^{-\lambda t}L_x(t)$, where $L_x(t)=e^{\lambda t}P(t,x)$ is a slowly varying function of time $t$ for each $x$. By Eq.\eqref{asymptyield}, the eigenvalue $\lambda$ is identified with the asymptotic long-term zero-coupon yield.

We note that since $\Psi(t)=v$ is a stationary solution of the Riccati ODE \eqref{riccati_d}, the vector $v$ satisfies the following {\em quadratic vector equation}:
$$\frac{1}{2}v^\dagger \alpha_{i}v+\beta_i^\dagger v-\delta_i=0,\quad i\in\emph{I},\quad B_{JJ}^\dagger v_J-\delta_J=0.$$
However, in general this quadratic vector equation may have multiple solutions leading to multiple exponential-affine eigenfunctions.
In order to determine the solution that defines the long-term factorization, if it exists, it is essential to verify that $v$ is the limiting solution of the Riccatti ODE, i.e. that Eq.\eqref{psi_converge} holds.
In this regard, we recall that \citet{linetsky_2014_cont} identified the unique {\em recurrent eigenfunction} $\pi_R$ of an affine pricing kernel with the {\em minimal} solution of the quadratic vector equation (see Appendix F in the on-line e-companion to \citet{linetsky_2014_cont}).
We recall that, for a Markovian pricing kernel $S$ (see \citet{hansen_2009} and \citet{linetsky_2014_cont}), we can associate a martingale
$$M^\pi_t=S_t e^{\lambda t}\frac{\pi(X_t)}{\pi(X_0)}$$ with {\em any} positive eigenfunction $\pi(x)$. In general, positive eigenfunctions are not unique. \citet{linetsky_2014_cont} proved uniqueness of a recurrent eigenfunction $\pi_R$ defined as such a positive eigenfunction  of the pricing kernel $S$, i.e. $${\mathbb E}_x^{\mathbb P}[S_t \pi(X_t)]=e^{-\lambda t}\pi(x)$$ for some $\lambda$,  that, under the locally equivalent probability measure (eigen-measure) ${\mathbb Q}^{\pi_R}$ defined by using the associated martingale $M_t^{\pi_R}$ as the Radon-Nikodym derivative, the Markov state process $X$ is recurrent.
However, in general, without additional assumptions, the recurrent eigenfunction $\pi_R$ associated with the minimal solution to the quadratic vector equation may or may not coincide with the eigenfunction $\pi_L$ germane to the long-term limit and, thus, the long forward measure may or may not coincide with the recurrent eigenmeasure (the fixed point $v$ of the Riccati ODE may or may not be the minimal solution of the quadratic vector equation).
Under additional exponential ergodicity assumptions the fixed point of the Riccati ODE is necessarily the minimal solution of the quadratic vector equation and  $\pi_R=\pi_L$. If the exponential ergodicity assumption is not satisfied, they may differ, or one may exist, while the other does not exist. We refer the reader to \citet{linetsky_2014_cont} and \citet{linetsky_2014long} for the exponential ergodicity assumption. Analytical tractability of affine models allows us to provide fully explicit examples to illustrate these theoretical possibilities. In the next section we give a range of examples.

\section{Examples}
\label{examples}

\subsection{Cox-Ingersoll-Ross Model}
\label{example_cir}

Suppose the state follows a CIR diffusion (\citet{cox_1985_2}):
\be
dX_t=(a -\kappa_{\mathbb P} X_t)dt+\sigma\sqrt{X_t}dW^{\mathbb{P}}_t,
\eel{cir}
where $a>0$, $\sigma>0$, $\kappa_{\mathbb P}\in {\mathbb R}$, and $W^{\mathbb{P}}$ is a one-dimensional standard Brownian motion (in this case $m=d=1$ and $n=0$). Consider the CIR pricing kernel in the form \eqref{affine_pk}. The short rate is given by \eqref{affineshortr} with $g=\gamma+au$ and $h=\delta-u\kappa_{\mathbb P}-u^2\sigma^2/2$. For simplicity we choose $\gamma=-au$ and $\delta=1+u\kappa_{\mathbb P}+u^2\sigma^2/2,$ so that the short rate can be identified with the state variable, $r_t=X_t$.
The market price of Brownian risk is $\lambda_t=\sigma u \sqrt{X_t}$. Under ${\mathbb Q}$ the short rate follows the process \eqref{affineq}, which  is again a CIR diffusion, but with a different rate of mean reversion:
\be
\kappa_{\mathbb Q}=\kappa_{\mathbb P}+\sigma^2u.
\ee

The fixed point $v$ of the Riccati ODE
$$\Psi'(t)=-\frac{1}{2}\sigma^2\Psi^2(t)-\kappa_{\mathbb P}\Psi(t)+\delta$$ with the initial condition
$\Psi(0)=u$
can be readily determined. Since $-\frac{1}{2}u^2\sigma^2-u\kappa_{\mathbb P}+\delta=1>0$, we know that $\Psi(0)=u$ is between the two roots of the quadratic equation $-\frac{1}{2}\sigma^2 x^2-\kappa_{\mathbb P} x+\delta=0$. This immediately implies that $\Psi(t)$ converges to the larger root, i.e.
\be
\lim_{t\rightarrow\infty}\Psi(t)=\frac{\sqrt{\kappa_{\mathbb P}^2+2\sigma^2\delta}-\kappa_{\mathbb P}}{\sigma^2}=\frac{\sqrt{\kappa_{\mathbb Q}^2+2\sigma^2}-\kappa_{\mathbb P}}{\sigma^2}=\frac{\kappa_{\mathbb L}-\kappa_{\mathbb P}}{\sigma^2}=:v,
\ee
where we introduce the following notation:
$$
\kappa_{\mathbb L}=\sqrt{\kappa_{\mathbb Q}^2+2\sigma^2}.
$$
Thus, the long bond in the CIR model is given by
\be
B_t^\infty=e^{\lambda t-\frac{\kappa_{\mathbb L}-\kappa_{\mathbb Q}}{\sigma^2}(X_t-X_0)}
\eel{cirlongbond}
with
\be
\lambda=\frac{a(\kappa_{\mathbb L}-\kappa_{\mathbb Q})}{\sigma^2}
\eel{cireigenvalue}
and the long bond volatility
$$
\sigma_t^\infty=-\frac{\kappa_{\mathbb L}-\kappa_{\mathbb Q}}{\sigma} \sqrt{X_t}.
$$
Under the long forward measure the state follows the process \eqref{affinel}, which  is again a CIR diffusion, but with the different rate of mean reversion $\kappa_{\mathbb L}>\kappa_{\mathbb Q}$. The fixed point $v$ is proportional to the difference between the rate of mean reversion under the long forward measure ${\mathbb L}$ and the data generating measure ${\mathbb P}$. It defines the market price of risk under ${\mathbb L}$ via $\lambda_t^\infty = v\sigma \sqrt{X_t}$.

We note that if one selects $u=(-\kappa_\mathbb{P}\pm\sqrt{\kappa_\mathbb{P}^2-2\sigma^2})/\sigma^2$ in the specification of the pricing kernel, then $v=0$ and $\lambda_t^\infty=0$, so the margingale component in the long term factorization is degenerate, and the pricing kernel is in the transition independent form. In this case, $\kappa_{\mathbb P}=\kappa_{\mathbb L}$ so that the data-generating measure coincides with the long-forward measure. This is the condition of Ross' recovery theorem (see \citet{linetsky_2014_cont} for more details).

Since the closed form solution for the CIR zero-coupon bond pricing function is available (\citet{cox_1985_2}), these results can also be recovered by directly calculating the limit
$$\lim_{T\rightarrow \infty}\frac{P(T-t,y)}{P(T,x)}=e^{\lambda t}\frac{\pi(y)}{\pi(x)}$$
with the eigenvalue $\lambda$ given by Eq.\eqref{cireigenvalue} and the eigenfunction $\pi(x)=e^{-\frac{\kappa_{\mathbb L}-\kappa_{\mathbb Q}}{\sigma^2}x}$.

\begin{remark}
\citet{borovicka_2014mis} in their Example 4 on p.2513 also consider an exponential-affine pricing kernel driven by a single CIR factor. However, their specification of the PK is in a special form such that $h=0$ in Eq.(4) for the short rate (which corresponds to the choice $\delta=u\kappa_{\mathbb P}+u^2\sigma^2/2$ in our parameterization). Thus, all dependence on the CIR factor is contained in the martingale component in the risk-neutral factorization of their PK, with the short rate being constant. In this special case the long bond is deterministic and the long forward measure is simply equal to the risk-neutral measure since the short rate is independent of the state variable.
In this special case the pricing operator has two distinct positive eigenfunctions. One of the eigenfunctions is constant. This eigenfunction defines the risk-neutral measure, which coincides with the long forward measure in this case due to independence of the short rate and the eigenfunction of the state variable. The second eigenfunction (Eq.(19) in \citet{borovicka_2014mis}) defines a probability measure, which is distinct from the risk-neutral measure and, hence, distinct from the long forward measure as well. Depending on the specific parameter values of the CIR process, either one of the two eigenfunctions may serve as the recurrent eigenfunction. The eigenmeasure associated with the other eigenfunction will not be recurrent, as the CIR process will have a non-mean reverting drift under that measure.
\end{remark}

\subsection{CIR Model with Absorption at Zero: ${\mathbb L}$ Exists, ${\mathbb Q}^{\pi_R}$ Does Not Exist}
\label{example_absorb}
We next consider a degenerate CIR model \eqref{cir} with $a=0$,
 $\sigma>0$, and $\kappa\in {\mathbb R}$. When $a$ vanishes, the diffusion has an absorbing boundary at zero, i.e. there is a positive probability to reach zero in finite time and, once reached, the process stays at zero with probability one for all subsequent times.
Consider a pricing kernel in the form of Eq.\eqref{affine_pk}. The short rate is given by \eqref{affineshortr} with $g=\gamma$ and $h=\delta-u\kappa_\mathbb{P}-\frac{1}{2}u^2\sigma^2$. We assume $\gamma=0$ and $\delta=1+u\kappa_\mathbb{P}+\frac{1}{2}u^2\sigma^2>0$, so that short rate $r_t$ takes values in $\mathbb{R}_+$. The market price of Brownian risk is $\lambda_t=\sigma u \sqrt{X_t}$, and under ${\mathbb Q}$ the short rate follows the process \eqref{affineq}, which  is again a CIR diffusion with an absorbing boundary at zero, but with a different rate of mean reversion $\kappa_\mathbb{Q}=\kappa_\mathbb{P}+\sigma^2 u$.

It is clear that under any locally equivalent measure, zero remains absorbing and thus no recurrent eigenfunction exists. Nevertheless, we can proceed in the same way as in our analysis of the CIR model to show that $$B_t^\infty=e^{-\frac{\kappa_\mathbb{L}-\kappa_\mathbb{Q}}{\sigma^2}(X_t-X_0)}$$ with $\kappa_\mathbb{L}=\sqrt{\kappa_\mathbb{Q}^2+2\sigma^2}$ is the long bond and $X_t$ solves the CIR SDE \eqref{cir} with $a=0$ and mean-reverting rate $\kappa_\mathbb{L}$ under ${\mathbb L}$. In fact, the treatment of the long bond and the long forward measure is exactly the same as in the non-degenerate example with $a>0$, even though this case is transient with absorption at zero. The eigenvalue degenerates in this case, $\lambda=0$, and the asymptotic long-term zero-coupon yield vanishes, corresponding to the eventual absorption of the short rate at zero.

\subsection{Vasicek Model}
\label{example_ou}
Our next example is the \citet{vasicek_1977equilibrium} model with the state variable following the OU diffusion:
\[
dX_t=\kappa(\theta_\mathbb{P}-X_t)dt+\sigma dW^{\mathbb{P}}_t
\]
with $\kappa>0$, $\sigma>0$ (in this case $m=0$, $n=d=1$).
Consider the pricing kernel in the form \eqref{affine_pk}.
The short rate is given by \eqref{affineshortr} with $g=\gamma+u\kappa\theta_\mathbb{P}-\frac{1}{2}u^2\sigma^2$ and
$h=\delta-u\kappa$. For simplicity we choose $\gamma=-u\kappa\theta_\mathbb{P}+\frac{1}{2}u^2\sigma^2$ and $\delta=1+u\kappa,$ so that the short rate is identified with the state variable, $r_t=X_t$.
The market price of Brownian risk is constant in this case, $\lambda_t=\sigma u$. Under ${\mathbb Q}$ the short rate follows the process \eqref{affineq}, which in this case is again the OU diffusion, but with a different long run mean $$\theta_\mathbb{Q}=\theta_\mathbb{P}-\frac{\sigma^2 u}{\kappa}$$ (the rate of mean reversion $\kappa$ remains the same).
The explicit solution to the ODE
$\Psi'(t)=-\kappa\Psi(t)+\delta$ with the initial condition $\Psi(0)=u$ is $$\Psi(t)=-(\frac{\delta}{\kappa}+u)e^{-\kappa t}+\frac{\delta}{\kappa},$$ and the limit yields the fixed point
$\lim_{t\rightarrow\infty}\Psi(t)=\frac{\delta}{\kappa}=:v.$
Thus, the long bond in the Vasicek model is given by
$$B_t^\infty=e^{\lambda t-\frac{1}{\kappa}(X_t-X_0)}$$ with the long-term yield
$$
\lambda=\theta_\mathbb{Q}-\frac{\sigma^2}{2\kappa^2}
$$
and the long bond volatility $$\sigma_t^\infty=-\frac{\sigma}{\kappa}.$$
Under the long forward measure the short rate follows the process \eqref{affinel}, which is again the OU diffusion, but with a different long run mean $$\theta_\mathbb{L}=\theta_\mathbb{Q}-\frac{\sigma^2}{\kappa^2}$$ (the rate of mean reversion remains the same).

\subsection{Non-mean-reverting Gaussian Model: $\mathbb{Q}^{\pi_R}$ Exists, $\mathbb{L}$ Does not Exist}\label{L_no_exist}

Suppose $X_t$ is a Gaussian diffusion with affine drift and constant volatility
\be
dX_t=\kappa(\theta-X_t)dt+\sigma dW^{\mathbb{P}}_t,
\ee
but now
with $\kappa<0$, so that the process is not mean-reverting. Consider a risk-neutral pricing kernel that discounts at the rate $r_t=X_t$, i.e. $S_t=e^{-\int_0^t X_s ds}$. Then the pure discount bond price is given by $P_t^T=P(X_t,T-t)$ with
\be
P(x,t)=A(t)e^{-x B(t)},
\ee
\be
B(t)=\frac{1-e^{-\kappa t}}{\kappa},\enskip A(t)=\exp\Big\{(\theta-\frac{\sigma^2}{2\kappa^2})(B(t)-t)-\frac{\sigma^2}{4\kappa}B^2(t)\Big\}.
\eel{bp_ou}
It is easy to see that the ratio $P(y,T-t)/P(x,T)$ does not have a finite limit as $T\rightarrow \infty$ and, hence, $P_t^T/P_0^T$ does not converge as $T\rightarrow \infty$.  Thus, the long bond and the long forward measure $\mathbb{L}$ do not exist in this case.
However, the recurrent eigenfunction $\pi_R$ and the recurrent eigen-measure $\mathbb{Q}^{\pi_R}$ do exist in this case and are explicitly given in Section 6.1.3 of  \citet{linetsky_2014_cont}. Under $\mathbb{Q}^{\pi_R}$, $X_t$ is the OU process with mean reversion (since $\kappa<0$):
\be
dX_t=(\sigma^2/\kappa-\kappa\theta+\kappa X_t)dt+\sigma dW_t^{\mathbb{Q}^{\pi_R}}.
\ee

\subsection{ Breeden Model}
Our next example is a special case of \citet{breeden_1979intertemporal} consumption CAPM  considered in Example 3.8 of \citet{hansen_2009}. There are two independent factors, a stochastic volatility factor $X_t^v$ evolving according to the CIR process
\be
dX_t^v=\kappa_v(\theta_v-X_t^v)dt+\sigma_v\sqrt{X_t^v} dW_t^{v,\mathbb{P}}
\ee
and a mean-reverting growth rate factor $X_t^g$ evolving according to the OU process
\[
dX_t^g=\kappa_g(\theta_g-X_t^g)dt+\sigma_g dW_t^{g,\mathbb{P}}.
\]
Here it is assumed that $\kappa_v,\kappa_g>0$, $\theta_v,\theta_g>0$, $\sigma_g>0$, $\sigma_v<0$ (so that a positive increment to $W^v$ reduces volatility), and $2\kappa_v\theta_v\geq \sigma_v^2$ (so that volatility stays strictly positive).
Suppose that equilibrium consumption evolves according to
\be
dc_t=X_t^g dt+\sqrt{X_t^v} dW_t^{v,\mathbb{P}}+\sigma_c dW_t^{g,\mathbb{P}},
\ee
where $c_t$ is the logarithm of consumption $C_t$. Thus, $X^g$ models predictability in the growth rate and $X^v$ models predictability in volatility.
Suppose also that the representative consumer's  preferences are given by
\be
\mathbb{E}\left[\int_0^\infty e^{-b t}\frac{C_t^{1-a}-1}{1-a}dt\right]
\ee
for $a,b>0$.
Then the implied pricing kernel $S_t$ is
\be
S_t=e^{-bt}C_t^{-a}=\exp\left(-a\int_0^t X_s^g ds-b t-a\int_0^t \sqrt{X_s^v} dW_s^{v,\mathbb{P}}-a\int_0^t \sigma_c dW_t^{g,\mathbb{P}}\right).
\ee
Using the SDEs for $X^g$ and $X^v$ it can be cast in the affine form \eqref{affine_pk}:
\be
\begin{array}{ll}
S_t & =\exp\left( -\gamma t-\frac{a}{\sigma_v}(X_t^v-X_0^v)-\frac{a\sigma_c}{\sigma_g}(X_t^g-X_0^g)\right. \\
 & \left.\quad-\frac{a\kappa_v}{\sigma_v}\int_0^t X_s^v ds-(a+\frac{a\sigma_c\kappa_g}{\sigma_g})\int_0^t X_s^gds\right),\\
\end{array}
\ee
where $\gamma=b-\frac{a\kappa_v\theta_v}{\sigma_v}-\frac{a\sigma_c\kappa_g\theta_g}{\sigma_g}$.
\begin{proposition}
If $\kappa_g>0$ (mean-reverting growth rate) and $\kappa_v+\sqrt{\kappa_v^2+2a\kappa_v\sigma_v}+a\sigma_v>0$, Eq.\eqref{psi_converge} holds and, thus, Theorem \ref{affine_long} applies. The long bond is given by
\be
B_t^\infty=\exp\left(\lambda t+(\frac{a}{\sigma_v}-v_1)(X_t^v-X_0^v)+(\frac{a\sigma_c}{\sigma_g}-v_2)(X_t^g-X_0^g)\right),
\ee
where $\lambda=\gamma-\frac{1}{2}\sigma_g^2v_2^2+\kappa_v\theta_v v_1+\kappa_g\theta_g v_2$, $v_1=(\sqrt{\kappa_v^2+2a\kappa_v\sigma_v}-\kappa_v)/\sigma_v^2$,   $v_2=a(1/\kappa_g+\sigma_c/\sigma_g)$,
and the state variables have the following dynamics under ${\mathbb L}$:
\be dX_t^v=\left(\kappa_v\theta_v-\sqrt{\kappa_v^2+2a\kappa_v\sigma_v}X_t^v\right)dt+\sigma_v\sqrt{X_t^v}dW_t^{v,\mathbb{L}},
\ee
\be
dX_t^g=\kappa_g\left(\theta_g-\frac{a\sigma_g^2}{\kappa_g^2}-\frac{a\sigma_c\sigma_g}{\kappa_g}-X_t^g\right)dt+\sigma_g dW_t^{g,\mathbb{L}}.
\ee
\end{proposition}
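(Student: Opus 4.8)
The plan is to specialize the Riccati system \eqref{riccati_d} to this two-factor model, exploit the independence of the two factors to decouple it into two scalar equations, establish the convergence \eqref{psi_converge} componentwise, and then read off all of the stated formulas by direct substitution into Theorem \ref{affine_long}. First I would record the affine data. The CIR volatility factor $X^v$ occupies the single CIR coordinate ($m=1$, $I=\{1\}$) and the OU growth factor $X^g$ occupies the single OU coordinate ($n=1$, $J=\{2\}$), so that $b=(\kappa_v\theta_v,\kappa_g\theta_g)^\dagger$, $B=\mathrm{diag}(-\kappa_v,-\kappa_g)$, $a_{JJ}=\sigma_g^2$, and $\alpha_1$ has its single nonzero entry $\sigma_v^2$ in position $(1,1)$. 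Matching the exponential-affine form of $S_t$ to \eqref{affine_pk} gives $u=(a/\sigma_v,\,a\sigma_c/\sigma_g)^\dagger$ and $\delta=(a\kappa_v/\sigma_v,\,a+a\sigma_c\kappa_g/\sigma_g)^\dagger$. Because the two diffusions are driven by independent Brownian motions and the coupling matrices are diagonal, the system \eqref{riccati_d} splits into a scalar Riccati equation $\Psi_1'=-\tfrac12\sigma_v^2\Psi_1^2-\kappa_v\Psi_1+\delta_1$ with $\Psi_1(0)=a/\sigma_v$, and a scalar linear equation $\Psi_2'=-\kappa_g\Psi_2+\delta_2$ with $\Psi_2(0)=a\sigma_c/\sigma_g$.

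The OU coordinate is immediate: since $\kappa_g>0$, the linear equation has the globally attracting fixed point $v_2=\delta_2/\kappa_g=a(1/\kappa_g+\sigma_c/\sigma_g)$, so $\Psi_2(t)\to v_2$. The substantive step is the CIR coordinate. The right-hand side $f(x)=-\tfrac12\sigma_v^2 x^2-\kappa_v x+\delta_1$ is a downward parabola with roots $x_\pm=(\pm D-\kappa_v)/\sigma_v^2$, where $D=\sqrt{\kappa_v^2+2a\kappa_v\sigma_v}$ (real by hypothesis); linearizing gives $f'(x_+)=-D<0$ and $f'(x_-)=D>0$, so $x_+=v_1$ is the stable root with basin of attraction $(x_-,\infty)$. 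I would then locate the initial condition relative to the roots by two elementary computations: evaluating the parabola gives $f(a/\sigma_v)=-\tfrac12 a^2<0$, so $a/\sigma_v$ lies outside $[x_-,x_+]$, while the identity $\sigma_v^2(u_1-x_-)=\kappa_v+D+a\sigma_v$ shows that the hypothesis $\kappa_v+D+a\sigma_v>0$ is exactly the statement $u_1>x_-$. Together these force $u_1>x_+$, placing the initial condition in the basin of attraction, so $\Psi_1(t)\to v_1=(D-\kappa_v)/\sigma_v^2$ monotonically from above. This establishes \eqref{psi_converge} with $v=(v_1,v_2)^\dagger$.

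With convergence in hand the remaining claims are substitutions into Theorem \ref{affine_long}. The eigenfunction $\pi(x)=e^{(u-v)^\dagger x}$ of \eqref{affineeigen} and the long-bond formula \eqref{long_bond_affine} produce the stated $B_t^\infty$ with coefficients $a/\sigma_v-v_1$ and $a\sigma_c/\sigma_g-v_2$; the eigenvalue formula \eqref{affineeigenv} reads $\lambda=\gamma-\tfrac12\sigma_g^2 v_2^2+\kappa_v\theta_v v_1+\kappa_g\theta_g v_2$; and the $\mathbb{L}$-dynamics follow from \eqref{affinel}. The one reduction worth recording is $\kappa_v+\sigma_v^2 v_1=D$, which turns the $\mathbb{L}$-drift of $X^v$ into $\kappa_v\theta_v-D\,X_t^v$, together with $\kappa_g\theta_g-\sigma_g^2 v_2=\kappa_g(\theta_g-a\sigma_g^2/\kappa_g^2-a\sigma_c\sigma_g/\kappa_g)$ for the $X^g$ drift.

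I expect the main obstacle to be the sign bookkeeping in the CIR coordinate. Unlike the standard CIR example of Section \ref{example_cir}, where the pricing-kernel parameters make $f$ positive at the initial point so that $u$ sits between the roots and $\Psi$ rises to the larger root, here $\sigma_v<0$ and $f(u_1)<0$, so the initial condition lies outside the root interval; the content of the hypothesis $\kappa_v+D+a\sigma_v>0$ is precisely to guarantee that it lies above the unstable root $x_-$ rather than below it, which is what keeps the trajectory in the basin of attraction of $v_1$. Checking that this single inequality is both necessary and sufficient for convergence to the stated $v_1$, via the factorization identity $(\kappa_v+a\sigma_v)^2-D^2=a^2\sigma_v^2$, is the crux of the argument.
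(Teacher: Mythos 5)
Your proposal is correct and follows essentially the same route as the paper's proof: decouple the Riccati system into the scalar CIR and OU coordinates, use $\kappa_g>0$ for the linear equation, and use the hypothesis $\kappa_v+\sqrt{\kappa_v^2+2a\kappa_v\sigma_v}+a\sigma_v>0$ to place the initial condition $a/\sigma_v$ above the smaller root so that $\Psi_1$ converges to the larger root $v_1$. Your phase-line bookkeeping (computing $f(a/\sigma_v)=-a^2/2<0$ to show the initial point lies above the larger root, hence monotone convergence from above) is a slightly more detailed version of the paper's one-line argument, not a different method.
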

\begin{proof}
In this model Eq.\eqref{riccati_d} reduces to
\be
\begin{split}
&\Phi^\prime(t)=-\frac{1}{2}\sigma_g^2\Psi_2(t)^2 +\kappa_v\theta_v\Psi_1(t)+\kappa_g\theta_g\Psi_2(t)+\gamma, \quad \Phi(0)=0,\\
&\Psi_1^\prime(t)=-\frac{1}{2}\sigma_v^2\Psi_1(t)^2 -\kappa_v\Psi_1(t)+\frac{a\kappa_v}{\sigma_v},\quad \Psi_1(0)=\frac{a}{\sigma_v},\\
&\Psi_2^\prime(t)=-\kappa_g\Psi_2(t)+a+\frac{a\sigma_c\kappa_g}{\sigma_g},\quad \Psi_2(0)=\frac{a\sigma_c}{\sigma_g}.\\
\end{split}
\ee
In this special case $\Psi_1(t)$ and $\Psi_2(t)$ are separated and thus can be analyzed independently. It is easy to see that if $\kappa_g>0$ then $\Psi_2(t)$ converges to $v_2$. When $\kappa_v+\sqrt{\kappa_v^2+2a\kappa_v\sigma_v}+a\sigma_v>0$, $\frac{a}{\sigma_v}$ is greater than the smaller root of the second order equation $-\frac{1}{2}\sigma_v^2\Psi_1(t)^2 -\kappa_v\Psi_1(t)+\frac{a\kappa_v}{\sigma_v}$, which implies that $\Psi_1(t)$ converges to the larger root of the second-order equation for $v_1$. The eigenvalue and the dynamics of the state variable can be computed accordingly. $\Box$.
\end{proof}

The proof essentially combines the proofs in Examples \ref{example_cir} and \ref{example_ou}. Similar to these examples, we observe that the rate of mean reversion of the volatility factor is higher under the long forward measure, $\sqrt{\kappa_v^2+2a\kappa_v\sigma_v}>\kappa_v$, while the rate of mean reversion of the growth rate remains the same, but its long run level is lower under ${\mathbb L}$.

\subsection{ \citet{borovicka_2014mis} Continuous-Time Long-Run Risks Model}

Our next example is a continuous-time version of the long-run risks model of \citet{bansal_2004risks} studied by
\citet{borovicka_2014mis}. It features growth rate predictability and stochastic volatility in the aggregate consumption and recursive preferences. The model is calibrated to the consumption dynamics in \citet{bansal_2004risks}. The two-dimensional state modeling growth rate predictability and stochastic volatility follows the affine dynamics:
{\small\be
d\begin{bmatrix}
X^1_t\\
X^2_t\\
\end{bmatrix}
=\left(
\begin{bmatrix}
0.013\\ 0
\end{bmatrix}+
\begin{bmatrix}
-0.013&0\\
0&-0.021\\
\end{bmatrix}
\begin{bmatrix}
X^1_t\\X^2_t
\end{bmatrix}\right)dt+\sqrt{X_t^1}
\begin{bmatrix}
-0.038&0\\
0&0.00034 \\
\end{bmatrix}
d
\begin{bmatrix}
W^{1,\mathbb{P}}_t\\
W^{2,\mathbb{P}}_t\\
\end{bmatrix},
\ee}
where $W^{i,\mathbb{P}}_t,$ $i=1,2,$ are two independent Brownian motions. Here $X^1_t$ is the stochastic volatility factor following a CIR process and $X^2_t$ is an OU-type mean-reverting growth rate factor with stochastic volatility. The aggregate consumption process $C_t$ in this model evolves according to
\be
d\log C_t=0.0015dt+X^2_tdt+\sqrt{X^1_t} 0.0078 dW^{3,\mathbb{P}}_t,
\ee
where $W^{3,\mathbb{P}}$ is a third independent Brownian motion  modeling direct shocks to consumption. Numerical parameters are from \citet{borovicka_2014mis} and are calibrated to monthly frequency (here time is measured in months). The representative agent in this model is endowed with recursive homothetic preferences and a unitary elasticity of substitution. \citet{borovicka_2014mis} solve for the pricing kernel:
\[
d\log S_t=-0.0035dt-0.0118X^1_tdt-X^2_t dt-\sqrt{X^1_t}\Big[0.0298\quad0.1330\quad0.0780\Big]dW^{\mathbb{P}}_t,
\]
where the three-dimensional Brownian motion $W^{\mathbb{P}}_t=(W^{i,\mathbb{P}}_t)_{i=1,2,3}$ is viewed as a column vector.

We now cast this model specification in the {\em three-dimensional} affine form of Assumption \ref{assumption_affine_PK}. To this end, we introduce a third factor
$X^3_t=\log S_t$. We can then write the pricing kernel in the exponential affine form $S_t=e^{X_t^3}$, where the state vector $(X^1_t,X^2_t,X^3_t)$ follows a three-dimensional affine diffusion driven by a three-dimensional Brownian motion:
\be
dX_t
=\left(
b+
B
X_t\right)dt+\sqrt{X_t^1}
\rho
dW^{\mathbb{P}}_t,
\ee
where the numerical values for entries of the three-dimensional vector $b$ and $3\times 3$-matrices $B$ and $\rho$ are given above.

We can now directly apply our general results for affine pricing kernels. First,
by Theorem \ref{RN_affine}, the short rate is $r(X_t)=0.0035-0.00057798 X^1_t+X^2_t$ and depends only on the factors $X^1$ and $X^2$ and is independent of $X^3$. The risk-neutral (${\mathbb Q}$-measure) dynamics is given by:
\be
d\begin{bmatrix}
X^1_t\\
X^2_t\\
X^3_t\\
\end{bmatrix}
=\left(
\begin{bmatrix}
0.013\\0\\-0.0035
\end{bmatrix}+
\begin{bmatrix}
-0.0119&0&0\\
-0.00004522&-0.021&0\\
0.0129&-1&0\\
\end{bmatrix}
\begin{bmatrix}
X^1_t\\X^2_t\\X^3_t
\end{bmatrix}\right)dt+\sqrt{X_t^1}\rho dW^{\mathbb{Q}}_t,
\ee
where
\be
\rho=\begin{bmatrix} -0.038&0&0\\0&0.00034&0\\-0.0298&-0.1330&-0.0780\\
\end{bmatrix}.
\ee
The vector $\Psi(t)=(\Psi_1(t),\Psi_2(t),\Psi_3(t))^\dagger$ solves the ODE (here $\alpha:=\rho\rho^\dagger$):
$$
\Psi_1^\prime(t)=-\frac{1}{2}\Psi(t)^\dagger \alpha\Psi(t)+B_{11}\Psi_1(t)+B_{21}\Psi_2(t)+B_{31}\Psi_3(t),
$$
$$
\Psi_2^\prime(t)=B_{22}\Psi_2(t)+B_{32}\Psi_3(t),\quad
\Psi_3^\prime(t)=0
$$
with $\Phi(0)=\Psi_1(0)=\Psi_2(0)=0, \Psi_3(0)=-1$. It is immediate that
$$\Psi_3(t)\equiv-1\quad \text{and}\quad \Psi_2(t)=\frac{B_{32}}{B_{22}}(1-e^{B_{22}t})$$ and, since $B_{22}<0$, $$\lim_{t\rightarrow\infty}\Psi_2(t)=B_{32}/B_{22}=47.6191:=v_2.$$
To see $\Psi_1(t)$ convergence, notice that we can write $-\frac{1}{2}\Psi(t)^\dagger \alpha\Psi(t)+B_{11}\Psi_1(t)+B_{21}\Psi_2(t)+B_{31}\Psi_3(t)=c_1(\Psi_1(t))^2 + c_2 \Psi_1(t) + c_3 (\Psi_2(t))^2 + c_4 \Psi_2(t) + c_5$, where $c_1, c_2, c_3, c_4, c_5<0$. Since $\Psi_1(0)=\Psi_2(0)=0$, we have $\Psi_1'(0)<0$. Since $\Psi_2(t)>0$ and it is easy to see that $\Psi_1(t)<0$. Since $\Psi_2(t)<v_2$, we have $c_1(\Psi_1(t))^2 + c_2 \Psi_1(t) + c_3 (\Psi_2(t))^2 + c_4 \Psi_2(t) + c_5>c_1(\Psi_1(t))^2 + c_2 \Psi_1(t) + c_3 v_2^2+c_4 v_2+c_5$. We can check that $c_1(\Psi_1(t))^2 + c_2 \Psi_1(t) + c_3 v_2^2+c_4 v_2+c_5=0$ has two negative roots. Denote the larger root $v_1$, we see that $\Psi_1(t)>v_1$. Combining these facts, we see that $\Psi_1(t)$ converges to $v_1$. The exact value of $v_1$ has to be determined numerically. The numerical solution yields $$v_1=\lim_{t\rightarrow\infty}\Psi_1(t)=-0.2449.$$
In Figure \ref{phipsi}, we plot the functions $\Psi_1(t)$ and $\Psi_2(t)$, as well as the gross return $B_t^{t+T}$ on the $T$-bond over the period $[0,t]$ as a function of $T$. In this numerical example we take $t=12$ months, so we are looking at the one-year holding period return, and assume that the initial state $X_0$ and the state $X_t$ are both equal to the stationary mean under $\mathbb{P}$. We observe that in this model specification $\Psi(t)$ and $B_t^{t+T}$ are already very close to the fixed point for $t$ around 30 years (360 months).
\begin{figure}[h]
\includegraphics[width=60mm, height=50mm]{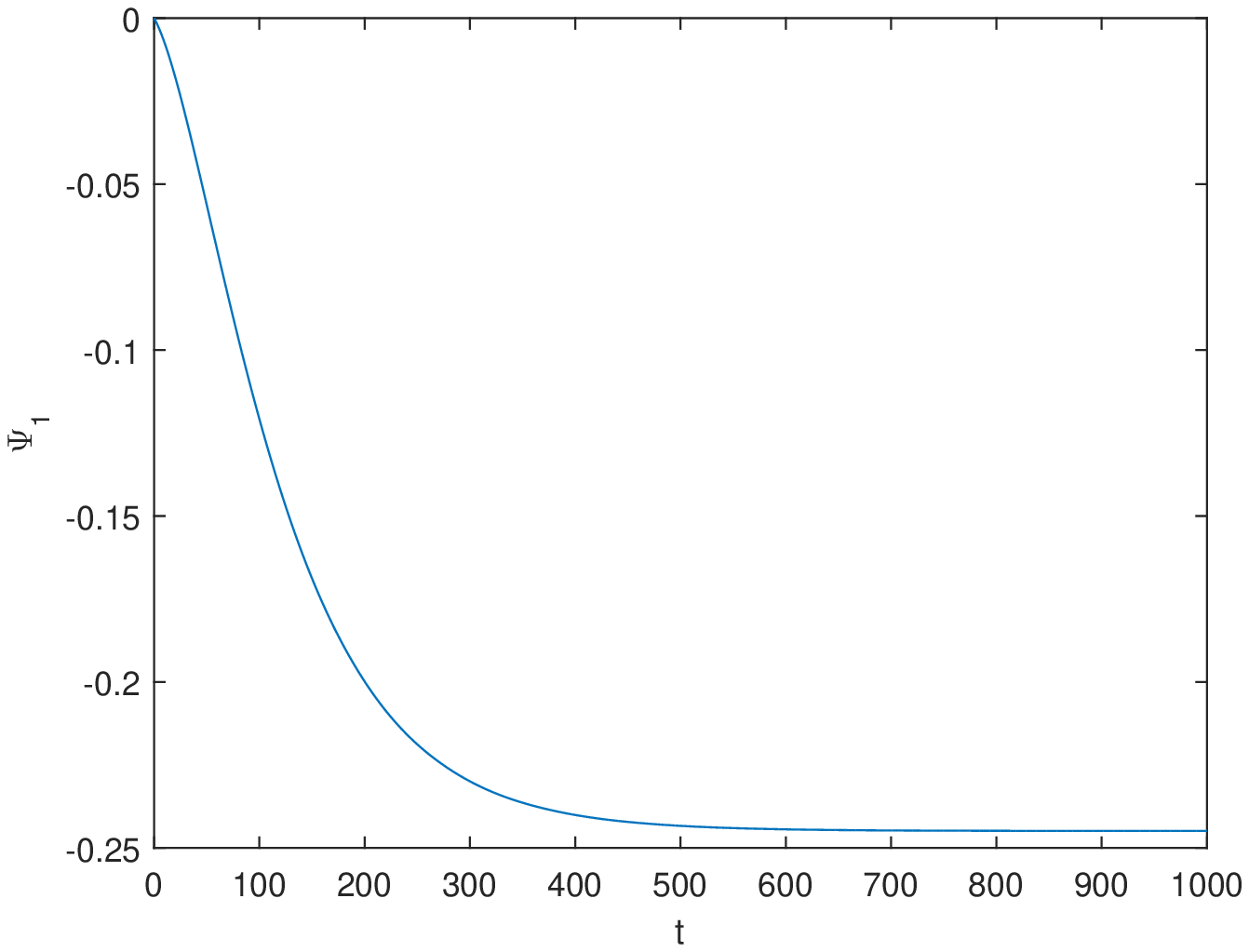}
\includegraphics[width=60mm, height=50mm]{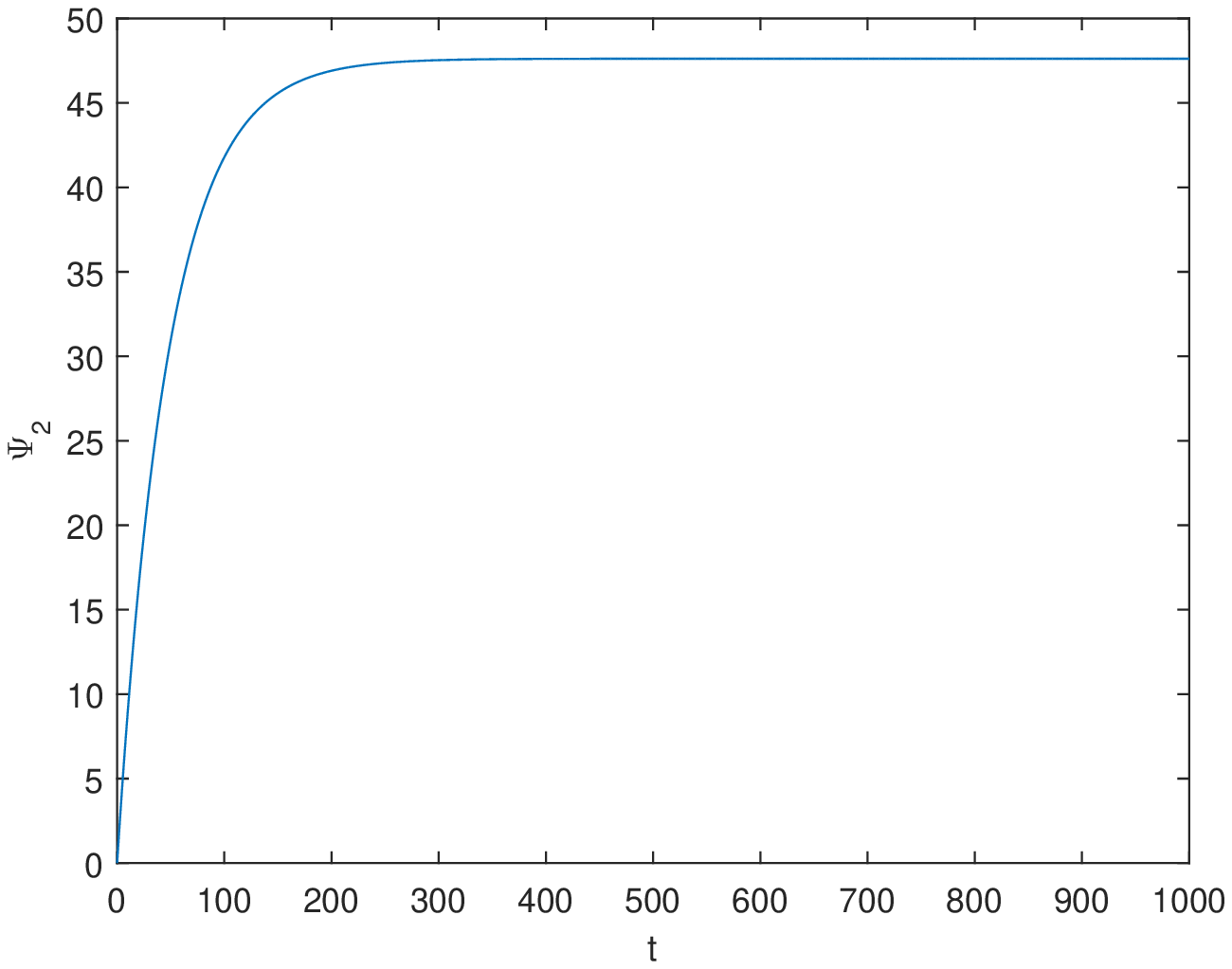}
\includegraphics[width=60mm, height=50mm]{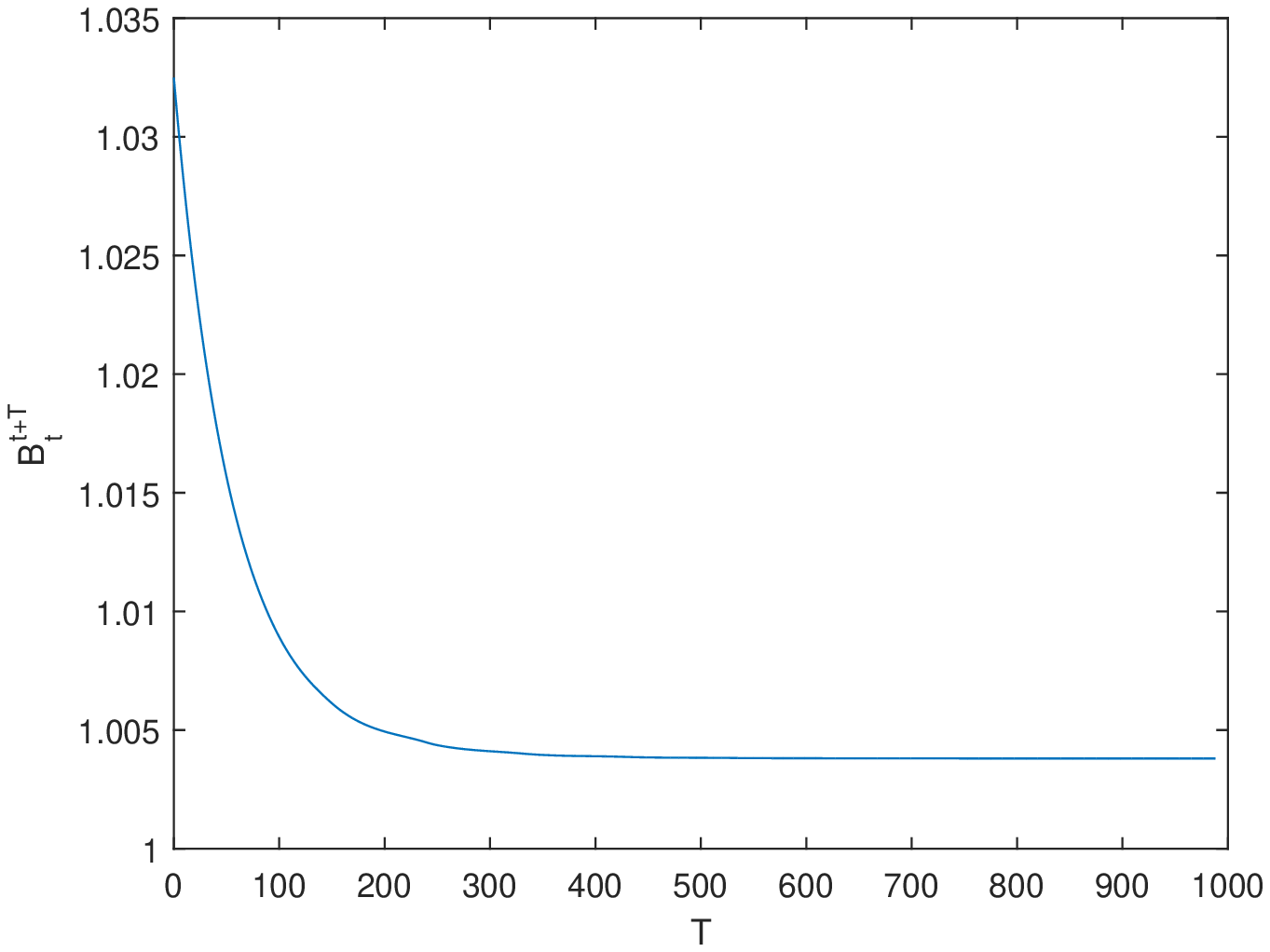}
\caption{Plot of $\Psi_1(t)$, $\Psi_2(t)$ and $B_t^{t+T}$. Time is measured in months.}
\label{phipsi}
\end{figure}

By Theorem \ref{affine_long}, the eigenfunction determining the long bond is $\pi(x)=e^{-v_1 x^1-v_2 x^2},$ corresponding to the eigenvalue (note this is not annualized yield since time unit is in month)
$$\lambda=b_1v_1+b_2v_2-b_3=0.0003163,$$ the long bond is given by $$B_t^\infty=e^{\lambda t - v_1(X_t^1-X_0^1)-v_2(X_t^2-X_0^2)},$$ the  martingale component is given by $$M_t^\infty=e^{\lambda t - v_1(X_t^1-X_0^1)-v_2(X_t^2-X_0^2)+X_t^3},$$ and  the state vector $(X^1_t, X_t^2, X^3_t)$ has the following dynamics under the long forward measure ${\mathbb L}$:
\be
d\begin{bmatrix}
X^1_t\\
X^2_t\\
X^3_t\\
\end{bmatrix}
=\left(
\begin{bmatrix}
0.013\\0\\-0.0035
\end{bmatrix}+
\begin{bmatrix}
-0.0115&0&0\\
-0.00005074&-0.021&0\\
0.0153&-1&0\\
\end{bmatrix}
\begin{bmatrix}
X^1_t\\X^2_t\\X^3_t
\end{bmatrix}\right)dt+\sqrt{X_t^1}\rho dW^{\mathbb{L}}_t.
\ee

As already observed by \citet{borovicka_2014mis}, in this model the state dynamics under the long forward measure ${\mathbb L}$ is close to the state dynamics under the risk-neutral measure ${\mathbb Q}$ and is substantially distinct from the dynamics under the data-generating measure ${\mathbb P}$ due to the volatile martingale component $M_t^\infty$.
However, our approach to the analysis of this model is different from  the analysis of \citet{borovicka_2014mis}. We cast it as a three-factor affine model and directly apply our Theorem \ref{affine_long} for affine models that is, in turn, a consequence of our Theorem \ref{implication_L1} for semimartingale models. We only need to determine the fixed point \eqref{psi_converge} of the Riccati equation. Existence of the long bond, the long term factorization of the pricing kernel, and the long forward measure then immediately follow from Theorem \ref{affine_long}, without any need to verify ergodicity.
In fact, the three-factor affine process $(X^1_t,X_t^2,X_t^3)$ is not ergodic, and not even recurrent, as is immediately seen from the dynamics of $X^3$.
In contrast, the approach in \citet{borovicka_2014mis} relies on the two-dimensional mean-reverting affine diffusion $(X^1_t,X_t^2)$. Namely, since the Perron-Frobenius theory of \citet{hansen_2009} requires ergodicity to single out the principal eigenfunction and ascertain its relevance to the long-term factorization,  \citet{borovicka_2014mis} implicitly split the pricing kernel into the product of two sub-kernels, a multiplicative functional of the two-dimensional Markov process $(X^1_t,X_t^2)$ and the additional factor in the form $e^{-\int_0^t 0.0780\sqrt{X_s^1}dW_s^{3,{\mathbb P}}}$. The Perron-Frobenius theory of \citet{hansen_2009} is then applied to the multiplicative functional of the two-dimensional Markov process $(X^1_t,X_t^2)$.
In contrast, in our approach we do not require ergodicity and work directly with the non-ergodic three-dimensional process and verify that the Riccati ODE possesses a fixed point, which is already sufficient for existence of  the long-term factorization in affine models by Theorem \ref{affine_long}.

\section{Conclusion}

This paper constructs and studies the long-term factorization of affine pricing kernels into discounting at the rate of return on the long bond and the martingale component that accomplishes the change of probability measure to the long forward measure. It is shown that the principal eigenfunction of the affine pricing kernel germane to the long-term factorization is an exponential-affine function of the state vector with the coefficient vector identified with the fixed point of the Riccati ODE. The long bond volatility and the volatility of the martingale component are explicitly identified in terms of this fixed point. When analyzing a given affine model, a research needs to establish whether the Riccati ODE possesses a fixed point. If the fixed point is determined, the long-term factorization then follows. It is shown how the long-term factorization plays out in a variety of asset pricing models, including single factor CIR and Vasicek models, a two-factor version of Breeden's CCAPM, and the three-factor long-run risks model studied in \citet{borovicka_2014mis}.

\bibliography{mybib7}

\end{document}